\newcommand{\etal}{et al.}
\newcommand{\floor}[1]{\lfloor #1 \rfloor}
\title{Dynamic Binary Search Trees: Improved Lower Bounds for the Greedy-Future Algorithm}
\titlerunning{Dynamic BSTs: Improved Lower Bounds for Greedy-Future} % Optional, to be used if the title is long.
\author{Yaniv {Sadeh}}{Tel Aviv University, Israel}{yanivsadeh@mail.tau.ac.il}{https://orcid.org/0000-0002-5712-1028}{} %Please use full name; only 1 author per \author macro; first two parameters are mandatory, other parameters can be empty. Please provide at least the name of the affiliation and the country. The full address is optional. Use additional curly braces to indicate the correct name splitting when the last name consists of multiple name parts.
\author{Haim {Kaplan}}{Tel Aviv University, Israel}{haimk@tau.ac.il}{https://orcid.org/0000-0001-9586-8002}{}
\authorrunning{Y. Sadeh and H. Kaplan} % Mandatory. First: Use abbreviated first/middle names. Second (only in severe cases): Use first author plus 'et al.'
\keywords{Binary Search Trees, Greedy Future, Geometric Greedy, Lower Bounds, Dynamic Optimality Conjecture} % Mandatory; please add comma-separated list of keywords.
\begin{document}

\maketitle

\begin{abstract}
Binary search trees (BSTs) are one of the most basic and widely used data structures. The best static tree for serving a sequence of queries (searches) can be computed by dynamic programming. In contrast, when the BSTs are allowed to be dynamic (i.e.\ change by rotations between searches), we still do not know how to compute the optimal algorithm (OPT) for a given sequence. One of the candidate algorithms whose serving cost is suspected to be optimal up-to a (multiplicative) constant factor is known by the name Greedy Future (GF). In an equivalent geometric way of representing queries on BSTs, GF is in fact equivalent to another algorithm called Geometric Greedy (GG). Most of the results on GF  are obtained using the geometric model and the study of GG. Despite this intensive recent fruitful research, the best lower bound we have on the competitive ratio of GF is $\frac{4}{3}$. Furthermore, it has been conjectured that the additive gap between the cost of GF and OPT is only linear in the number of queries. In this paper we prove a lower bound of $2$ on the competitive ratio of GF, and we prove that the additive gap between the cost of GF and OPT can be $\Omega(m \cdot \log\log n)$ where $n$ is the number of items in the tree and $m$ is the number of queries.
\end{abstract}

\pagebreak % Per instruction: first page contains only title, abstract, and authors' info.

%%% %%% %%% \input{section_introduction.tex}
\section{Introduction}
\label{section_introduction}

Binary search trees (BSTs) are one of the most basic and widely used data-structures. They are used to store a sorted set of keys
from a totally ordered universe. Traversing BSTs is usually done by using a single pointer, initially pointing to the root, and moving to the left or right child according to the order of the searched key and the key of the item at the current node. Therefore, we typically define the cost\footnote{Our cost model is formally defined in Definition~\ref{definition_avg_cost}, in Section~\ref{section_model}.} of a search to be the length of the search path. The data structure itself may be static, or change dynamically throughout time, in response to insertions and deletions of items, and possibly even restructured during queries.

Static BSTs are well understood. One can guarantee that the longest path from the root to a leaf is of length  $O(\log n)$ if the number of keys is $n$, by using a balanced tree. If the access sequence is known in advance (in fact only the frequency of accesses of each key matters) then an $O(n^2)$ time algorithm computing the optimal static tree for the particular set of frequencies was given by Knuth \cite{KnuthOptStatic1971}. It is also notable that the lower bound on the cost when the known frequencies are $\vec{f} = [f_1,f_2,\ldots,f_n]$ and the number of queries is $m$, is $\Omega(m \cdot H(\vec{f}))$ where $H(\vec{f}) = \sum_{i=1}^{n}{f_i \log \frac{1}{f_i}}$ is the entropy function. A simple way with $O(n \log n)$ running time to construct a near-optimal static (centroid) tree whose cost is $O(m \cdot H(\vec{f}))$, has been described by Mehlhorn~\cite{StaticNearOptimalTreeMehlhorn}. The running time has been improved to $O(n)$ by Fredman~\cite{Fredman1975LinearAlmostOptBST}.

In contrast to the static case, the dynamic case is less understood. One can, of course, serve the sequence with a static tree. But, for many sequences we must change the structure of the tree as we make the searches in order to be efficient. For example, the requested items may be different in different parts of the sequence so a different set of items has to be placed near the root during different parts of the sequence. Restructuring is done by rotations that maintain the symmetric order. When rotations are allowed, the cost is defined to be the size of the subtree that contains the search path and all edges which we rotate.
 
Here,  we assume that the set of values stored in the tree does not change (no insertions or deletions), yet restructuring the tree is allowed to speed up future searches. One famous dynamic algorithm for doing this is the \emph{Splay} algorithm of Sleator and Tarjan~\cite{Splay1985}. After each query, the splay algorithm moves the queried item to the root of the tree, according to three simple rules called \emph{zig-zag}, \emph{zig-zig} and \emph{zig}.  The  splay algorithm is efficient in the sense that it is able to exploit the structure of many families of sequences. In particular splay is proven to be as good as the static optimum (up to a constant factor), which also implies that the cost of splay on any given sequence is at most $O(\log n)$ times the (dynamic) optimum cost. Sleator and Tarjan conjectured that splay is in fact dynamically-optimal, meaning that its  cost is like the cost of an optimal algorithm that knows the whole sequence of queries in advance, up to some constant factor. However, this dynamic-optimality conjecture  of splay is still open. In fact, it is open whether there is any dynamically-optimal online binary search tree algorithm. The best competitive ratio achievable to date is $O(\log \log n)$, and it is obtained by Tango~\cite{TangoTrees}, Multi-splay~\cite{MultiSplay} and Chain-splay~\cite{ChainSplay} trees, and a geometric divide-and-conquer approach of \cite{GeometricLgLgN_Chalermsook}.

While seeking for (better) guaranteed competitiveness, other dynamic algorithms were considered. A promising candidate was independently proposed by Lucas~\cite{LucasGF1988} and Munro~\cite{MunroGF2000}, which is now commonly referred to as \emph{Greedy Future}, henceforth: $GF$ in short. As its name suggests, $GF$ is a greedy algorithm that rearranges the nodes on the path from the root to the current queried item as a treap whose priorities are according to the future accesses\footnote{Each item in a treap has two keys:  {\em value} and {\em priority}. The treap is a binary search tree with respect to the values of the items and a heap with respect to their priorities. That is, the priority of an item is no larger than the priorities of its children. In our case, the priorities are deterministically defined by future requests in a way that we define precisely in Algorithm~\ref{alg_greedy_future}.} (as this paper deals with analyzing $GF$, we detail it formally in Algorithm~\ref{alg_greedy_future}). Note that unlike splay, $GF$, by definition, is required to know the future in order to restructure the tree. Surprisingly however, Demaine \etal~\cite{TheGeometryOfBSTs-SODA2009} showed that one can simulate $GF$ without knowing the future by a hierarchy of split-trees while losing only a constant factor in performance.
 
Additionally, \cite{TheGeometryOfBSTs-SODA2009} presented a  geometric view of an algorithm serving queries by a dynamic binary search tree using a two dimensional grid on which we mark the sequence as well as the items accessed by the algorithm. In this presentation there is yet another natural promising candidate for dynamic optimality, which is commonly known as \emph{Geometric Greedy} and sometimes simply \emph{Greedy}, which we shall refer to as $GG$. \cite{TheGeometryOfBSTs-SODA2009} showed that  $GG$ is in fact the same algorithm as $GF$.

The geometric view proved useful to  obtain new results regarding $GG$ and hence $GF$. Fox~\cite{Fox2011GGAccessLemma} proved that an \emph{access-lemma} that is analogues to the so called \emph{access-lemma} of splay trees holds for $GG$. From this follows that most of the nice properties that hold for splay also hold for $GF$. In particular, it follows  that $GF$ is $O(\log n)$ competitive. Chalermsook \etal~\cite{PatternAvoiding2015} analyzed upper bounds on the cost of $GG$ for access patterns which are permutations, and in particular found that for highly structured permutations, which they called \emph{$k$-decomposable}, the cost is $n \cdot 2^{\alpha(n)^{O(k)}}$ where $\alpha(n)$ is the inverse-Ackermann function. Chalermsook \etal~\cite{ImprovedPatternAvoidanceBounds-SODA2023} study special access patterns that belong to a broader family of \emph{pattern-avoiding} permutations. See \cite{TheLandscapeOfBSTs2016} for a survey of currently known properties of greedy and splay.

\medskip

\noindent
\textbf{Our Contributions:}
\begin{enumerate}
    \item It is known that $GF$ is not exactly optimal, but it  is conjectured, like splay, to be optimal up to a constant factor. In fact, it has been even more strongly conjectured by Demaine~\etal~\cite{TheGeometryOfBSTs-SODA2009} to be optimal up to an additive $O(m)$ term, and possibly even exactly $m$. Kozma~\cite{KozmaThesis} refuted the second part and gave a specific sequence for which this additive gap is $m+1$. In this paper we refute the linear gap conjecture and show a family of sequences for which the additive gap is at least $\Omega(m \log \log n)$. 
    
    \item The largest lower bound on the competitive ratio of $GF$ is $\frac{4}{3}$ by Demaine~\etal~\cite{TheGeometryOfBSTs-SODA2009}.
    They show a family of sequences on which after an initial query, the optimum pays $1.5$ on average per query while $GF$ pays $2$.\footnote{Reddmann~\cite{reddmann2021geometric} found an example in which the cost ratio between $GF$ and the optimum is $\frac{26}{17} \approx 1.53$. But this is for one particular sequence of a fixed length so it does not rule out any competitive ratio if we allow an additive constant.}
    We describe a technique that allows us to improve this lower bound to $2$.
    We note that  the best known lower bound on the competitive ratio of splay is $2$ (see \cite[Section 2.5]{LevyTarjan2019Thesis}). In both cases, the construction requires a rather large number of items (large $n$).
    
    \item \label{contribution_mult} Based on the multiplicative lower bound described above we show the following two interesting properties of $GF$: (1) There are sequences $X$ such that the cost of $GF$ on the reverse sequence is twice larger than the cost of $GF$ on $X$. (2) There are sequences $X$ such that  we can remove some queries from them and get a subsequence $X'$, such that the cost of $GF$ on $X'$  is twice larger than the cost of $GF$ on $X$.
\end{enumerate}

We study subsequences and reversal (contribution \ref{contribution_mult}) since any dynamically-optimal algorithm $A$ must have a ``nice'' behavior in these cases. Concretely, $A$ must satisfy the approximately-monotone property (Definition~\ref{definition_approximate_monotone}) which states that there is a fixed constant $c$ such that the cost of $A$ on any subsequence of any sequence is never more than $c$ times the cost on the whole sequence. As for reversal, the optimum can process a sequence and its reversal with similar costs up to a difference of $n$, thus any dynamically-optimal algorithm must be able to do so with costs that differ by at most a constant factor. We discuss this motivation in more  detail in Section~\ref{section_stable_sequences} (right after stating Theorem~\ref{theorem_GF_on_reverse_sequence}).

Our contributions are all based on the same technique, which is quite simple. We enforce $GF$ to maintain a static tree and only query the leaves of this tree. Although being dynamic in general, there are some access-patterns that cause $GF$ not to change the tree. By studying these patterns, we can study $GF$ on a static tree, and the analysis of its cost simplifies to the weighted-average of the depth of the queries (weighted by frequency). To lower-bound the gap between $GF$ and $OPT$, we analyze the average cost that can be saved by promoting the items in the leaves to locations closer to the root. Note that any other item can be placed further away from the root since it is never queried by the sequence.

%%% %%% %%% \input{section_model.tex}
\section{Model}
\label{section_model}

In this section we describe the model which we use, and define our notations. First, we note that throughout the paper $\lg x$ is used to denote the base two logarithm of $x$.

We consider a totally ordered universe of (fixed size) $n$ items. For simplicity, one may think of the values $\{1,\ldots,n\}$. The items are organized in some initial BST which we denote by $T_0$. Then, a sequence of queries, denoted by $X = [x_1,x_2,\ldots,x_m]$, is given, one query at a time. We reserve $m$ to denote the length of the sequence. The tree before serving $x_t$ is denoted by $T_{t-1}$. An algorithm has to find the queried value $x_t$, by traversing $T_{t-1}$ from its root. After finding $x_t$, the algorithm is allowed to re-structure $T_{t-1}$ to get $T_{t}$. We define the cost of the algorithm at time $t$ to be the total number of nodes that were touched at time $t$, both on the path to $x_t$ and for restructuring. The cost of an algorithm for the whole sequence is simply the sum of its costs over all times. We define it formally below.

\begin{definition}[Cost]
\label{definition_avg_cost}
Let $X$ be a sequence of queries, and let $T_0$ be an initial tree.
Let $A$ be an algorithm that serves $X$ and let $T_t$ be the tree that $A$ has after serving $x_t$. Let $P_t$ be the set of nodes on the path from the root to $x_t$ in $T_{t-1}$ and let $U_t$ be the set of nodes of the minimal subtree that contains all the edges that were rotated by $A$ to transform $T_{t-1}$ to $T_t$. Then the cost of $A$ for serving $X$ at time $t$ is $|P_t \cup U_t|$, and the cost of $A$ for serving $X$ is the sum of costs over $t=1,\ldots,m$. We denote the cost of $A$ to serve $X$ starting with $T_0$ by $cost(A,X,T_0)$. We denote the average cost per query by $\hat{c}(A,X,T_0) = \frac{cost(A,X,T_0)}{m}$. When $T_0$ is clear from the context, or immaterial, we write $cost(A,X)$ and $\hat{c}(A,X)$.
\end{definition}

\begin{definition}[Depth]
\label{definition_depth_in_tree}
Let $T$ be a tree.
The depth of a node
 $v \in T$, denoted by $d(v)$,
 is the number of edges in the path from the root to $v$ (in particular $d(root) = 0$). Note that the cost of querying $v$ (without restructuring) is $d(v) + 1$. We also define the depth of the tree, denoted by $d(T)$, as the maximum depth of a node in $T$, that is $d(T) = \max_{v \in T}{d(v)}$.
\end{definition}

\begin{definition}[Competitiveness]
\label{definition_competitive_ration}
We say that an algorithm $A$ is $(\alpha,\beta)$-competitive for initial tree $T_0$ if for any sequence of queries $X$, it holds that $cost(A,X,T_0) \le \alpha \cdot cost(OPT,X,T_0) + \beta$ where $OPT$ is a best algorithm to serve $X$ given $T_0$ (with full knowledge of $X$). When we do not specify $T_0$ we mean that the relation holds for all initial trees. We refer to $\alpha$ as the \emph{multiplicative term} and to $\beta$ as the \emph{additive term}. For ease of language, we regard the multiplicative term as the \emph{competitive ratio}, and also write ``the competitive ratio of'' instead of ``the multiplicative term of the competitiveness of''. In such cases, we assume that the additive term is $o(m)$. It is easiest to think of $\beta = O(n)$ while assuming that $m = \omega(n)$.
\end{definition}

To conclude this section, we give a precise description of the $GF$ algorithm, in Algorithm~\ref{alg_greedy_future}.
We emphasize that its implementation is complex and probably would not be good in practice. However, its main benefit is its theoretical value, as a candidate for dynamic optimality. Should it be proven to be dynamically-optimal, then we would get a better understanding of the problem and also a stepping-stone to analyze simpler algorithms, such as splay, in comparison to $GF$ rather than against some ``vague'' optimum that depends on the sequence.

\begin{algorithm}[ht]
    \SetAlgoLined
    \DontPrintSemicolon
    \KwIn{
         A sequence of queries $X \in [n]^m$ and an initial BST $T_0$. 
         We restructure $T_{t-1}$ to
         $T_t$ after serving 
         the request $x_t$  with  $T_{t-1}$ for $t=1,\ldots,m$.
    }

    % \KwOut{}
    
    \SetKwFunction{funcInit}{Restructure}
    \SetKwProg{Fn}{Function}{:}{}
    
    \Fn{\funcInit{query value $v$, current tree $T_{t-1}$, future accesses $X'$}}{
        Let $v_1 < v_2 < \ldots < v_k$ be the nodes on the path from the root of $T_{t-1}$ to the queried value $v$ (including $v$ and the root). We also define $v_0 = -\infty$ and $v_{k+1} = +\infty$. Denote the subtrees hanging off this path by $R_0,\ldots,R_k$. \;
        
        For each $i=1,\ldots,k$, let $\tau(v_i)$ be the index of the first appearance of a query of a value $x \in (v_{i-1},v_{i+1})$ in $X'$. Restructure the nodes $v_1,\ldots,v_k$ as a treap: maintain a BST ordering, while the heap's priorities are set to be the $\tau$ values, where the root's $\tau$ is smallest. Tie-break arbitrarily, e.g. in favor of smaller values, or smaller depth prior to restructuring. Then, hang the subtrees $R_0,\ldots,R_k$ unchanged at their appropriate locations. The resulting tree is $T_t$. \;
    }

    \caption{GreedyFuture ($GF$) Algorithm}
    \label{alg_greedy_future}
\end{algorithm}

%%% %%% %%% \input{section_stable_sequences.tex}
\section{Stable Sequences and Lower Bounds}
\label{section_stable_sequences}

In this section we properly define the family of \emph{stable sequences} (Definition~\ref{definition_stable_improved}) for which the tree maintained by $GF$ is never changed (i.e.\ the access path of the current query is a treap with respect to the suffix of the sequence). To prove our lower bounds we use such sequences in which only the items at the leaves of $GF$ are requested, and the internal nodes cause some extra cost that $OPT$ avoids. We use a natural way to represent such sequences as trees, and use this representation to prove the following lower bounds, which are the main results of this section.

\begin{restatable}{theorem}{theoremMultiplicativeTwoLowerBound}
\label{theorem_multiplicative_2_lower_bound}
If $GF$ is $(c,d)$-competitive where the additive term $d$ is sublinear in the length of the sequence, i.e. $d = o(m)$, then $c \ge 2$.
%%% Restate as "\theoremMultiplicativeTwo*"
\end{restatable}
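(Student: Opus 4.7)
The plan is to construct, for every $\varepsilon>0$, an initial tree $T_0$ on $n$ keys and an arbitrarily long sequence $X$ satisfying
\[
  cost(GF,X,T_0)\ \ge\ (2-\varepsilon)\,cost(OPT,X,T_0)-O(n),
\]
which together with $d=o(m)$ will force $c\ge 2$ by letting $m\to\infty$. The high-level idea is to exploit the notion of a \emph{stable} sequence advertised in the section header: arrange $T_0$ and $X$ so that $GF$ never rotates while serving $X$, reducing its cost to the sum of query depths, and choose $T_0$ so that these depths are about twice the depths the queried keys could attain in a tree tailored to them.

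The first step is a combinatorial stability lemma. Since $GF$ rebuilds each access path as a treap keyed by the first future access of the key-interval between consecutive path nodes, $GF$ does not rotate at time $t$ whenever the current root-to-$x_t$ path already satisfies this treap order. I would ensure this for every query by querying only the leaves of $T_0$ in a cyclic order $\sigma$ in which, along every root-to-leaf path, a parent's key-interval is re-entered strictly before any of its children's sub-intervals; iterating an in-order traversal of the leaves is the natural candidate. For $T_0$ itself I would take a tree with $L$ \emph{live} leaf keys all at depth $2\lg L$, obtained by starting from a balanced tree on $L$ leaves (of depth $\lg L$) and \emph{inflating} each root-to-leaf path with $\lg L$ additional \emph{dead} keys drawn from an appropriate key-gap, so that $n=\Theta(L\log L)$.

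Granting stability, $GF$ performs no rotation, and hence
\[
  cost(GF,X,T_0) \;=\; \sum_{t=1}^m\bigl(d_{T_0}(x_t)+1\bigr) \;=\; m(2\lg L+1).
\]
For the $OPT$ side I would use the offline algorithm that spends $O(n)$ rotations once to transform $T_0$ into a balanced BST on the live keys (pushing every dead key into a subtree that $X$ never re-enters) and then serves statically, yielding $cost(OPT,X,T_0)\le O(n)+m(\lg L+1)$. Fixing $L$ and letting $m\to\infty$ drives the ratio to $(2\lg L+1)/(\lg L+1)$; sending $L\to\infty$ then pushes this below $2$ by an arbitrarily small amount, which is exactly what is needed.

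The principal obstacle is the stability lemma. Because $GF$'s priorities are determined by future accesses to whole key-intervals rather than to the path nodes themselves, the dead chain nodes inserted during inflation have the potential to perturb the treap order on an access path. Ruling this out will require choosing the orientation (left-chain vs.\ right-chain) of each inflating segment in tandem with the cyclic leaf order, and then a careful case analysis by the depth at which two consecutive queries branch. The remaining ingredients --- the cost computation for a static $GF$, and the restructuring-and-serve bound for $OPT$ --- should then follow in a routine manner.
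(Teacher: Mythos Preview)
Your stability lemma cannot hold for the proposed construction, and this is not a matter of case analysis but a structural obstruction. Suppose a chain of dead keys $c_1 > \cdots > c_k$ sits directly above a live leaf $\ell$; in a BST these keys are necessarily confined to the open interval between $\ell$ and its original parent $p$ in the balanced tree. When $GF$ queries $\ell$, the access path contains $\ell, c_k,\ldots,c_1,p$ as consecutive values, and for each $c_i$ the treap priority $\tau(c_i)$ is the first future access to a key strictly between its two path-neighbours. That open interval lies inside $(\ell,p)$ and contains only dead keys, so $\tau(c_i)=\infty$. Since $\tau(\ell)$ is finite (the next request for $\ell$), the treap restructuring places $\ell$ above every $c_i$, and after a single access the entire dead chain drops below $\ell$ and never again contributes to its access cost. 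Choosing left-chain versus right-chain orientation cannot help: either way the chain nodes are trapped in a key-gap that contains no live key. In short, $GF$ is built precisely to detect and bypass path nodes whose neighbourhood is never revisited, so inert padding cannot inflate its cost. (A secondary issue: an in-order cycle over the leaves is not stable even on the uninflated balanced tree---once two consecutive queries fall in the right subtree of the root, the right child's $\tau$ beats the root's and $GF$ rotates. The order that does work is bit-reversal, but that alone does not rescue the construction.)

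The paper's route avoids dead keys entirely: every internal node of the tree has \emph{both} subtrees queried, in an alternating or $3$-periodic (``weakly-stable'') pattern. Concretely it uses the recursive tree $F_r$ whose top block consists of one actual leaf and two recursive copies, with the root weakly-stable; a simple recurrence gives $\hat c(GF,X)\to 6$ while an explicit static promotion shows $\hat c(OPT,X)<3$, yielding the ratio $2$. The depth-doubling you are after is achieved not by inert padding but by the slack inherent in the alternation: each internal node is touched on every access through it, yet a static rearrangement that moves the actual leaves toward the root halves the average depth.
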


\begin{restatable}{theorem}{theoremAdditiveLoglogn}
\label{theorem_additive_loglogn}
For every $n \ge 2$ there exist sequences $X \in [n]^m$ such that $cost(GF,X) = cost(OPT,X) + \Omega(m \cdot \lg \lg n)$.
Among these sequences, there exists a sequence whose length is $m = n^{\Theta(\frac{\lg \lg n}{\lg \lg \lg n})}$. (There exist other longer sequences too.)
%%% Restate as "\theoremAdditiveLoglogn*"
\end{restatable}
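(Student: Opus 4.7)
The plan is to construct a BST $T$ on $n$ items together with a stable query sequence (one on which $GF$ never restructures) that visits only the leaves of $T$. Under stability, the cost of $GF$ is exactly $m(1+\bar{d})$, where $\bar{d}$ is the frequency-weighted average leaf depth in $T$. Meanwhile $OPT$ is upper bounded by the cost of the Mehlhorn/Knuth near-optimal static tree, namely $m(H(\vec{f})+O(1))$ for the empirical frequency vector $\vec{f}$. Hence $cost(GF)-cost(OPT)\ge m\bigl(\bar{d}-H(\vec{f})-O(1)\bigr)$, and it suffices to design $(T,\vec{f})$ and a stable realization with $\bar{d}-H(\vec{f})=\Omega(\lg\lg n)$ and total length $m=n^{\Theta(\lg\lg n/\lg\lg\lg n)}$.

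First, I would extract a clean combinatorial sufficient condition for stability from Algorithm~\ref{alg_greedy_future}: the tree is unchanged after serving a leaf $v$ with access path having value-sorted order $v_1<\cdots<v_k=v$ if and only if the priorities $\tau(v_i)$, i.e.\ the indices of the next request lying in the slot $(v_{i-1},v_{i+1})$, are ordered by depth in $T$ (ancestors receive smaller priorities). Any sequence repeating a ``$T$-compatible'' permutation of the leaves --- one in which the leaves of every subtree of $T$ appear contiguously --- meets this condition, because on any access path the next request entering a given slot is governed by the subtree ordering. Weighted variants, in which the contiguous leaf block of each subtree is repeated a prescribed number of times before moving on, are likewise stable and give enough flexibility to realize a range of skewed frequency vectors on the leaves.

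Second, I would build $T$ as a recursive hierarchy of $k=\Theta(\lg\lg n/\lg\lg\lg n)$ levels. At each level the items are split into groups whose subtrees are recursively built balanced BSTs, so every leaf sits at depth $\Theta(\lg n)$. The access schedule cyclically visits the level-$1$ groups and recurses into each with its own inner schedule. Choosing the per-level multiplicities so that at each level the empirical distribution among groups is ``mildly'' non-uniform produces a per-level entropy deficit of $\Theta(\lg\lg\lg n)$, which telescopes across the $k$ levels to $\bar{d}-H(\vec{f})=\Theta(k\cdot\lg\lg\lg n)=\Theta(\lg\lg n)$. The total length is $n^{\Theta(k)}=n^{\Theta(\lg\lg n/\lg\lg\lg n)}$.

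The main obstacle will be verifying stability for the recursive schedule: at every query the entire access path must already form a treap with respect to the future $\tau$-priorities, a global condition coupling the interleaving of inner and outer cycles at all levels. I would address this by a synchronized recursion in which, during a visit to an inner group, the outer groups are ``frozen'' in a canonical configuration; then $\tau(v_i)$ for an internal node $v_i$ living at level $j$ of $T$ equals the next time a transition at level $\ge j$ occurs, which monotonically increases with the depth of $v_i$, as required. The remaining pieces --- the precise entropy calculation, the bound on $OPT$ via Mehlhorn's construction, and the final additive accounting --- then reduce to careful bookkeeping.
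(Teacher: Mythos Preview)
Your proposal has two connected gaps that prevent the argument from going through.

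\textbf{The stability criterion is inverted.} You correctly identify that $GF$ leaves $T$ unchanged after a query iff along the access path ancestors receive smaller $\tau$-priorities than their descendants. But a repeated permutation in which the leaves of every subtree appear \emph{contiguously} does not satisfy this; it is essentially the opposite of what is needed. Take the complete BST on $\{1,\ldots,7\}$ (root $4$, leaves $1,3,5,7$) with the periodic sequence $1,3,5,7$. After serving $5$ the path is $\{4,6,5\}$; the next request in the slot $(5,+\infty)$ of $6$ is $7$ (one step ahead), while the next request in the slot $(-\infty,5)$ of $4$ is $1$ (two steps ahead), so $\tau(6)<\tau(4)$ and $GF$ rotates $6$ to the root. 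What actually keeps the tree fixed is that at every internal node the accesses \emph{alternate} between the two subtrees (the paper's strongly-stable condition, Lemma~\ref{lemma_fixed_structure}); your ``weighted variants,'' which repeat one block several times before moving on, violate alternation even more severely. Consequently you have no mechanism for realizing the skewed leaf frequencies on which your entropy-deficit argument relies.

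\textbf{The entropy bound is too blunt here.} Even restricting to sequences that \emph{are} stable, the Mehlhorn upper bound on $OPT$ gives nothing. In a strongly-stable sequence every leaf $x$ has frequency exactly $2^{-d(x)}$ (Lemma~\ref{lemma_frequency}), so $H(\vec f)=\sum_x 2^{-d(x)}d(x)=\bar d$ identically, and your lower bound $\bar d-H(\vec f)-O(1)$ collapses to $-O(1)$. The paper does not compare $GF$ to a generic near-optimal static tree; it builds a \emph{specific} promoted tree for the $(k,r)$-pattern (Lemma~\ref{lemma_average_promotion}, Figure~\ref{figure_k_r_tree_structure_definition}) in which each depth-$k$ explicit leaf is moved to the root of its recursive block and the $k$ un-queried trunk nodes are pushed below it. This promoted tree has average cost roughly $\bar d-k$, strictly better than Mehlhorn's $H+O(1)=\bar d+O(1)$; choosing $r=2^k$ then makes $k=\Theta(\lg\lg n)$ while $n=\Theta(k^r)$ and $m=2^{kr}=n^{\Theta(\lg\lg n/\lg\lg\lg n)}$. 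The $\Omega(\lg\lg n)$ gap comes from exploiting the wasted trunk depth in this particular tree, something an entropy comparison cannot detect.
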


Theorem~\ref{theorem_multiplicative_2_lower_bound} enables us to prove the following two theorems, proven in Appendix~\ref{appendix_section_missing_proofs}.

\begin{restatable}{theorem}{theoremGFOnSubsequence}
\label{theorem_GF_on_subsequence}
For any $\epsilon > 0$ there exists a sequence $X$ with a subsequence (not necessarily consecutive) $X' \subseteq X$ such that $cost(GF,X') \ge (2-\epsilon) \cdot cost(GF,X)$.
%%% Restate as "\theoremGFOnSubsequence*"
\end{restatable}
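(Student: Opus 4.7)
The plan is to leverage Theorem~\ref{theorem_multiplicative_2_lower_bound} together with the stable-sequence machinery of this section. Fix $\epsilon > 0$ and invoke Theorem~\ref{theorem_multiplicative_2_lower_bound} to obtain an initial tree $T_0$ and a sequence $X'$, stable for $GF$ with $T_0$, satisfying $cost(GF, X', T_0) \ge (2 - \epsilon/2) \cdot cost(OPT, X', T_0)$. The construction behind Theorem~\ref{theorem_multiplicative_2_lower_bound} pins $GF$ to the tree $T_0$, in which the queried leaves are deep, while $OPT$ saves a factor of nearly $2$ by switching to a different tree $T^{\star}$ in which the frequently queried leaves are promoted close to the root.

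Next, I would construct $X$ by inserting auxiliary queries into $X'$ so that $X$ itself is a stable sequence for $GF$ with respect to $T^{\star}$ rather than $T_0$. This is feasible because $GF$'s restructuring rule (Algorithm~\ref{alg_greedy_future}) orders the nodes on each access path as a treap whose priorities are the first-future-access times; by injecting auxiliary queries at the right positions I can override these first-future-access times so that, along every access path in $T^{\star}$, the treap produced by $GF$ equals the $T^{\star}$-ordering on that path. The auxiliary queries are chosen among items that are shallow in $T^{\star}$, so each of them is cheap for $GF$ to serve. Since $X$ is obtained from $X'$ by insertions only, $X' \subseteq X$ as required by the theorem statement.

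Once $X$ is stable for $T^{\star}$, each embedded query of $X'$ is served at its $T^{\star}$-depth, which coincides (up to additive $O(n)$) with $OPT$'s per-query cost on $X'$. A short initial prefix of $X$ that takes $GF$ from $T_0$ to $T^{\star}$, together with the aggregate cost of the auxiliary queries, contributes an overhead that is absorbed into an $\epsilon/4$-fraction of $cost(OPT, X', T_0)$ once the length $m$ of $X'$ is taken sufficiently large relative to $n$. Consequently $cost(GF, X, T_0) \le (1 + \epsilon/4) \cdot cost(OPT, X', T_0)$, and
\[
\frac{cost(GF, X')}{cost(GF, X)} \;\ge\; \frac{(2 - \epsilon/2) \cdot cost(OPT, X', T_0)}{(1 + \epsilon/4) \cdot cost(OPT, X', T_0)} \;\ge\; 2 - \epsilon.
\]

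The main technical obstacle is the explicit design of the auxiliary queries: they must simultaneously (i) force $X$ to be stable for $T^{\star}$, which is a combinatorial constraint on the first-future-access ordering along every $T^{\star}$-access path; (ii) be individually cheap, i.e.\ aimed at items that are shallow in $T^{\star}$; and (iii) be sparse enough that their aggregate cost is negligible compared with $cost(OPT, X', T_0)$. Meeting all three conditions at once relies on the concrete shape of the pair $(T_0, T^{\star})$ furnished by Theorem~\ref{theorem_multiplicative_2_lower_bound}, and in particular on the fact that the items promoted in $T^{\star}$, where the auxiliary queries must be placed, are essentially the items that $X'$ already queries frequently, which both keeps the auxiliary-query cost low and makes the needed priority overrides naturally consistent with the existing access pattern of $X'$.
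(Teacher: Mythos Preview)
Your high-level idea is right: force $GF$ to serve $X'$ from the ``bad'' stable tree $T_0$ while serving the supersequence $X$ from the ``good'' promoted tree $T^{\star}$. But your proof is incomplete at exactly the point you yourself flag as ``the main technical obstacle'': you never actually construct the auxiliary queries, and you do not verify that conditions (i)--(iii) can be met simultaneously. Stability in the sense of Definition~\ref{definition_stable_improved} is a rigid combinatorial constraint (an alternation pattern at \emph{every} internal node), so it is not at all clear that sparse, cheap insertions can convert a sequence that is stable for $T_0$ into one that is stable for $T^{\star}$. Without this construction, the argument is only a plan.

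The paper sidesteps this obstacle entirely with one structural observation about the specific $F_r$ promotion in Lemma~\ref{lemma_fibo_promotion_weakly_stable}: the promoted tree $T_Q$ is \emph{already} stable for the very same sequence $Z$, because the item promoted to the root of each recursive block is queried every third access and the remaining accesses alternate between its two subtrees. Hence no interleaved auxiliary queries are needed at all. The paper simply sets $X = P \circ Q \circ Z^k$ and $X' = P \circ Z^k$, where $P$ and $Q$ are the enforcing prefixes of Theorem~\ref{theorem_stabilization_oblivious_better} for $T_P$ and $T_Q$; removing the contiguous block $Q$ gives the subsequence. After the prefixes, $GF$ serves $Z^k$ statically from $T_Q$ in $X$ and from $T_P$ in $X'$, and the ratio tends to $cost(GF,Z,T_P)/cost(GF,Z,T_Q)$, which is arbitrarily close to $2$. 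In short, the ``hard'' construction you leave open is unnecessary: the optimized tree is itself a stable tree for $Z$, and recognising this is the missing idea.
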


\begin{restatable}{theorem}{theoremGFOnReverseSequence}
\label{theorem_GF_on_reverse_sequence}
Let $S$ be a sequence, we define $rev(S)$ to be the sequence $S$ in reverse. For any $\epsilon > 0$ there exists a sequence $X$ such that $cost(GF,rev(X)) \ge (2-\epsilon) \cdot cost(GF,X)$. 
%%% Restate as "\theoremGFOnReverseSequence*"
\end{restatable}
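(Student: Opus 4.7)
The plan is to leverage Theorem~\ref{theorem_multiplicative_2_lower_bound} together with the near-symmetry of $OPT$ under sequence reversal. Let $X^*$ be a sequence from the family guaranteed by Theorem~\ref{theorem_multiplicative_2_lower_bound}, for which $cost(GF, X^*) \ge (2-\delta)\cdot cost(OPT, X^*) - o(|X^*|)$ with an arbitrarily small $\delta > 0$. By the technique outlined at the end of Section~\ref{section_introduction} and developed in Section~\ref{section_stable_sequences}, $X^*$ is a stable sequence whose queries hit only the leaves of a particular (deep) tree maintained by $GF$, while $OPT$ benefits by placing those same leaves near its root.

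The first ingredient is that $OPT$ is near-symmetric under reversal, $|cost(OPT, X) - cost(OPT, rev(X))| \le O(n)$. Indeed, any algorithm that serves $X$ by transforming $T_0$ into $T_m$ through a sequence of rotations and pointer moves can be run in reverse to serve $rev(X)$ by transforming $T_m$ back into $T_0$, paying the same per-step costs up to an additive $O(n)$ for installing the new initial tree. In particular $cost(OPT, rev(X^*)) \le cost(OPT, X^*) + O(n)$.

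The main step is to analyze $GF$ on the reversed sequence. The asymmetry of $GF$ is essential: the future priorities driving the treap rule in Algorithm~\ref{alg_greedy_future} at a given step of $rev(X^*)$ are determined by the original \emph{prefix} of $X^*$, not its suffix, so $GF$ can maintain an entirely different static tree on $rev(X^*)$ than on $X^*$. Within the stable-sequence framework the cost of $GF$ reduces to a weighted sum of leaf-depths, and I would pick (or verify, for the construction of Theorem~\ref{theorem_multiplicative_2_lower_bound}) that the reversed future order induces a near-optimal shallow tree, yielding $cost(GF, rev(X^*)) \le (1+\gamma)\cdot cost(OPT, rev(X^*))$ with $\gamma = \gamma(\epsilon)$ arbitrarily small.

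Chaining the three estimates,
\[
cost(GF, X^*) \;\ge\; (2-\delta)\,cost(OPT, X^*) \;\ge\; (2-\delta)\bigl(cost(OPT, rev(X^*)) - O(n)\bigr) \;\ge\; \frac{2-\delta}{1+\gamma}\,cost(GF, rev(X^*)) - O(n),
\]
so for $\delta, \gamma$ small enough and $|X^*|$ large enough to absorb the $O(n)$ additive term, we obtain $cost(GF, X^*) \ge (2-\epsilon)\cdot cost(GF, rev(X^*))$. Setting $X := rev(X^*)$ then gives the desired $cost(GF, rev(X)) = cost(GF, X^*) \ge (2-\epsilon)\cdot cost(GF, X)$. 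The main obstacle is establishing near-optimality of $GF$ on $rev(X^*)$: one must arrange the queried leaves and their frequencies in the stable construction so that $GF$'s forward-looking priorities in the reverse direction produce a shallow tree, even though in the forward direction they produce a deep one. This is where the flexibility in the Theorem~\ref{theorem_multiplicative_2_lower_bound} construction is spent; the remaining manipulations are routine.
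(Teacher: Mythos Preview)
Your proposal has a genuine gap at its core. The entire argument hinges on the claim that $cost(GF, rev(X^*)) \le (1+\gamma)\cdot cost(OPT, rev(X^*))$, i.e., that $GF$ is near-optimal on the reversed sequence. You explicitly flag this as ``the main obstacle'' and then hand it off to unspecified ``flexibility in the Theorem~\ref{theorem_multiplicative_2_lower_bound} construction.'' But that construction is not flexible: it is the weakly-stable sequence $Z$ on the $F_r$ tree, determined up to repetition. And the claim is in fact false already for $r=1$ (items $\{1,\dots,5\}$, $Z=[5,3,1,5,3,1,\dots]$, $T_P$ with root $2$ and right child $4$). If $GF$ serves $rev(Z)=[1,3,5,\dots]$ starting from $T_P$, it restructures after one cycle into the mirror tree with root~$4$ and \emph{left} child~$2$, which is itself weakly-stable (now left-biased) for $rev(Z)$, and stays there forever paying $8/3$ per query---identical to $\hat c(GF,Z,T_P)$ and far from the optimal $5/3$. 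More generally, reversing a mixed-stable sequence yields another mixed-stable sequence on a mirrored tree of the same shape, so $GF$ pays the same on $rev(Z)$ as on $Z$. The asymmetry you need simply does not appear.

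The paper's proof takes a completely different route and never goes through $OPT$. It exploits the enforcing prefixes of Theorem~\ref{theorem_stabilization_oblivious_better}: it sets $X = Q \circ (rev(Z))^{k+1} \circ rev(P)$, where $P$ and $Q$ enforce the bad tree $T_P$ and the good (promoted) tree $T_Q$, respectively. The key structural fact is that \emph{both} $T_P$ and $T_Q$ remain static when $GF$ serves $rev(Z)$ on them (the interleaving pattern at each node is preserved under reversal), so $GF$ serves $X$ mostly on $T_Q$ at cost $\approx k\cdot cost(GF,Z,T_Q)$, while $rev(X)=P\circ Z^{k+1}\circ rev(Q)$ is served mostly on $T_P$ at cost $\approx k\cdot cost(GF,Z,T_P)$. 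The ratio is then exactly the $\approx 2$ from Lemma~\ref{lemma_fibo_promotion_weakly_stable}. The moral: you must \emph{engineer} which static tree $GF$ sits in for each direction via the prefixes; you cannot rely on $GF$ spontaneously discovering the good tree on the reverse.
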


The motivation for studying subsequences (Theorem~\ref{theorem_GF_on_subsequence}) is the  fact that $OPT$ always saves costs when queries are removed from its sequence. Formally, if $X' \subseteq X$, then $cost(OPT,X') \le cost(OPT,X)$. Indeed, $OPT$ can serve $X'$ by simulating a run on  $X$. More generally, this relation of costs when comparing a sequence to a subsequence of it, is an important property which even has a name:

\begin{definition}[Approximate-monotonicity~\cite{DionHarmon2006Thesis,LevyTarjan2019Thesis}]
\label{definition_approximate_monotone}
An algorithm $A$ is approximately-monotone with a constant $c$ if for any sequence $X$, subsequence $X' \subseteq X$, and initial tree $T$, it holds that $cost(A,X',T) \le c \cdot cost(A,X,T)$.
\end{definition}

\begin{corollary}
If $GF$ is approximately-monotone with a constant $c$, then  $c\ge 2$.
\end{corollary}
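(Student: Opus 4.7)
The plan is to obtain this corollary as a direct, essentially one-line consequence of Theorem~\ref{theorem_GF_on_subsequence}, which already packages the substantive combinatorial content. The approach is a simple limiting argument: assume $GF$ is approximately-monotone with some constant $c$, and use the pair $(X, X')$ supplied by Theorem~\ref{theorem_GF_on_subsequence} to force $c \ge 2 - \epsilon$ for arbitrary $\epsilon > 0$.

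Concretely, I would proceed as follows. Fix $\epsilon > 0$ and invoke Theorem~\ref{theorem_GF_on_subsequence} to obtain a sequence $X$ and a subsequence $X' \subseteq X$ with $cost(GF, X') \ge (2 - \epsilon) \cdot cost(GF, X)$, where both costs are measured against some common initial tree $T_0$ (implicit in the theorem's statement). Applying Definition~\ref{definition_approximate_monotone} at this same $T_0$ gives $cost(GF, X', T_0) \le c \cdot cost(GF, X, T_0)$. Chaining the two inequalities,
\[
(2 - \epsilon) \cdot cost(GF, X, T_0) \le cost(GF, X', T_0) \le c \cdot cost(GF, X, T_0),
\]
and dividing by the positive quantity $cost(GF, X, T_0)$ yields $c \ge 2 - \epsilon$. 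Letting $\epsilon \to 0$ completes the proof.

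The only bookkeeping subtlety — and the closest thing to an obstacle — is aligning the quantifiers on the initial tree. Approximate-monotonicity is postulated to hold for every initial tree $T$, whereas Theorem~\ref{theorem_GF_on_subsequence} merely guarantees existence of a bad pair $(X, X')$ for some tree. This alignment is, however, automatic: one just instantiates the monotonicity inequality at the particular $T_0$ produced (or implicitly fixed) by the theorem. Hence no genuine obstacle remains, and the corollary is literally a short deduction from Theorem~\ref{theorem_GF_on_subsequence}; all of the work has already been done upstream (in Theorem~\ref{theorem_multiplicative_2_lower_bound} and the stable-sequence construction that underlies it).
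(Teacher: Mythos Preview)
Your proposal is correct and matches the paper's intended derivation: the corollary is stated without proof precisely because it is the one-line consequence of Theorem~\ref{theorem_GF_on_subsequence} together with Definition~\ref{definition_approximate_monotone} that you spell out. Your remark about instantiating the monotonicity inequality at the particular initial tree is the right bookkeeping, and nothing more is needed.
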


As noted, $OPT$ is approximately-monotone with $c = 1$ (strictly monotone). The reason that approximate-monotonicity is of interest, in particular for $GF$, is because it is one of two properties that together are necessary and sufficient for any dynamically-optimal algorithm.  The complementing property, which $GF$ is known to satisfy, is simulation-embedding:

\begin{definition}[Simulation-Embedding~\cite{LevyTarjan2019Thesis}]
\label{definition_simulation_embedding}
An algorithm $A$ has the simulation-embedding property with a constant $c$ if for any algorithm $B$ and any sequence $X$, there exists a supersequence $Y \supseteq X$ such that $cost(A,Y) \le c \cdot cost(B,X)$. ($X$ is a subsequence of $Y$, not necessarily of consecutive queries.)
\end{definition}

An algorithm $A$ which is approximately-monotone with a constant $c_1$ and has the simulation-embedding property with a constant $c_2$ is dynamically-optimal with a constant $c_1 \cdot c_2$. Indeed, for any sequence $X$, there is some supersequence $Y(X) \supseteq X$ such that $cost(A,X) \le c_1 \cdot cost(A,Y(X)) \le c_1 \cdot c_2 \cdot cost(OPT,X)$. Harmon~\cite{DionHarmon2006Thesis} proved that $GG$, and hence $GF$, has the simulation-embedding property, hence $GF$ is dynamically-optimal if and only if it is approximately-monotone. An alternative indirect proof was given by~\cite{BSTInversions2020}, proving that $GG$ is $O(1)$-competitive versus the move-to-root algorithm, therefore inheriting the property from move-to-root.
%%% It can be shown that in the Geometric-model of cost, that there exists $Y\supseteq X$ such that $cost(GG,Y) \le 5 \cdot cost(A,X) - 4m$ (for initial tree of our choice) or $cost(GG,Y) \le 5 \cdot (cost(A,X) + n - 1) - 4m$ (for adversarial initial tree).

The motivation for studying reversal (Theorem~\ref{theorem_GF_on_reverse_sequence}) is that $OPT$ is oblivious to reversing the sequence of queries, up to an additive difference of $n$. Indeed, to serve a sequence $X$ in reverse, we can pay $n$ to restructure the initial tree $T_0$ to the final tree $T_m$, and then ``reverse the arrow of time'': when serving query $x_t$, also modify the tree from $T_{t}$ to $T_{t-1}$ where $T_i$ is the tree that $OPT$ would get by the end of processing the $i$-th query of $X$, in order. This means that any dynamically-optimal algorithm must be able to serve a sequence of requests and its reverse with the same cost up to a constant factor. Theorem~\ref{theorem_GF_on_reverse_sequence} does not disprove dynamic-optimality for $GF$, but gives some insight of how reversal affects $GF$.

\subsection{Maintaining a Static Tree for \emph{GF}}
\label{section_static_structure_of_GF}
In this section we describe the basic ``tool'' which we use to fix a tree structure for $GF$ despite its dynamic nature. That is, we describe a class of sequences which we call \emph{mixed-stable sequences} such that $GF$ never restructures its tree when serving a sequence in this class. For the sake of simplicity, we assume that the initial tree is structured as we need it to be. Appendix~\ref{section_appendix_initial_tree}  explains how to enforce a specific ``initial'' tree given an arbitrary initial tree, and also argues why this minor issue does not affect the competitive ratio of $GF$.

As noted, our objective is to produce a sequence that ``tricks'' $GF$ into having unnecessary nodes in the core of the tree, such that the requested values are only at the leaves. As an example, consider the classic sequence of queries $X = [1,3,1,3,\ldots]$ with an initial tree 
containing $2$ at the root, $1$ as a left child of the root and $3$ as a right child of the root. Because of the alternating pattern, $GF$ never re-structures the tree, and the cost per query is $2$ rather than $1.5$ on average (e.g. when $1$ is in the root, and $3$ is its right child).

\begin{definition}[Stable Nodes and Sequences]
\label{definition_stable_improved}
Let $T$ be a full binary search tree, and let $X$ be a sequence of queries over the items in the leaves of $T$. We define the stability of nodes as follows, see also Figure~\ref{figure_stability_pattern}.

We say that an inner node $v$ in $T$ is \emph{strongly-stable} if it has two children, and
the subsequence of $X$ consisting only of the items in the subtree of $v$, alternates between accesses to the left and right subtrees of $v$.

We say that an inner node $v$
with a left child $u$ in $T$ is \emph{weakly-stable with a left-bias} if both $v$ and $u$ have two children,
and the subsequence of $X$ consisting only of the items in the subtree of $v$,
repeats the following $3$-cycle. First it accesses the left-subtree of $u$, then the right subtree of $u$, and finally right subtree of $v$. (It is left-biased because $\frac{2}{3}$ of the accesses are to the left of $v$). Symmetrically, we say that $v$ is \emph{weakly-stable with a right-bias} if $v$ has two children, its right child $u$ has two children, and the restriction of $X$ to accesses in the subtree of $v$ repeats a $3$-cycle consisting of an access to the right subtree of $u$, the left subtree of $u$, and the left subtree of $v$. Notice that $u$ is a strongly-stable node by definition, and we refer to it as the \emph{favored-child} of $v$.

We regard the sequence $X$ as being induced by the tree $T$ with stability ``attached'' to its inner nodes. We assume that every node is stable, and refer to $X$ as a \emph{mixed-stable} sequence and to $T$ as a \emph{mixed-stable} tree. We distinguish two special cases: If all inner nodes are strongly-stable then we refer to $X$ and $T$ as \emph{strongly-stable}, and if exactly half of the inner nodes of $T$ are weakly-stable then we refer to $X$ and $T$ as \emph{weakly-stable} (recall that each weakly-stable node has a strongly-stable favored-child).
\end{definition}

\begin{figure}[ht]%[!ht]
	\centering
	% 	\subfigure[Nodes stability] {
	% Review-version was: \includegraphics[width=0.1\textwidth]{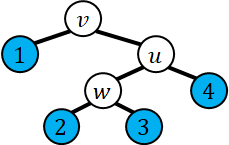}
	\includegraphics[width=0.22\textwidth]{figure-stability_nodes.png}
    % \label{fig_stability_of_nodes}}
	
	\caption{Node and sequence stability (Definition~\ref{definition_stable_improved}). First, consider the repeated sequence $421$, i.e. $X = 421421421 \ldots$. Then $v$ is a weakly-stable right-biased node because its visits pattern is a repetition of $right(u),left(u),left(v)$. $u$ is a strongly-stable node because its visits pattern is $right(u),left(u)$. $w$ is not stable at all, because its visits pattern is always $left(w)$. Second, consider the repetition of the access pattern $12141314$. One can verify that all three inner nodes are strongly-stable. Hence, this is a strongly-stable sequence. Third, note that no weakly-stable sequence corresponds to the figure, because it requires an even number of inner nodes, but if we make $w$ a leaf (removing $2,3$), then the repeated access pattern of $4w1$ is a weakly-stable sequence.}
	\label{figure_stability_pattern}
\end{figure}

To motivate Definition~\ref{definition_stable_improved} a little, note that the sequence $X = [1,3,1,3,\ldots]$ is a strongly-stable sequence that corresponds to a tree over the items $\{1,2,3\}$ where $2$ is in the root. $X$ yields a lower-bound of $\frac{4}{3}$ on the competitive ratio of $GF$. Similarly, the sequence $X' = [5,3,1,5,3,1,\ldots]$ is a weakly-stable sequence that corresponds to the tree over $\{1,2,3,4,5\}$ with $2$ at the root and $4$ its right-child. $X'$ yields a lower-bound of $\frac{8}{5}$ on the competitive ratio of $GF$, which is already an improvement over the best known lower bound, see also Figure~\ref{figure_basic_stable_trees}. The distinction between strongly-stable and weakly-stable nodes is that $GF$ may modify the structure of the tree when a weakly-stable node is considered, but only temporarily and without affecting the cost. In our example with $X'$, after querying $5$, $GF$ may put $4$ in the root instead of $2$, but following the query of $3$ this change will be reverted. %For the general case, see also the proof of Lemma~\ref{lemma_fixed_structure}, in Appendix~\ref{appendix_section_missing_proofs}.

\begin{figure}[t]%[!ht]
	\centering
	
	\begin{subfigure}[t]{.33\textwidth} % Review-version was: \begin{subfigure}[t]{.25\textwidth} 
        \centering
        \includegraphics[width=\textwidth]{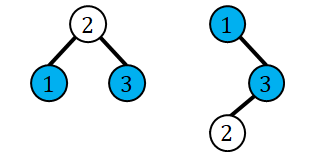}
        \caption{$X = [1,3,1,3,\ldots]$}
        \label{tree4over3} 
    \end{subfigure}
    \hspace{10mm} % '\hfill' for maximal distance. too much here...
    \begin{subfigure}[t]{.33\textwidth} % Review-version was: \begin{subfigure}[t]{.25\textwidth} 
        \centering
        \includegraphics[width=\textwidth]{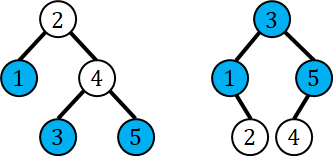}
        \caption{$X' = [5,3,1,5,3,1,\ldots]$}
        \label{tree8over5}
    \end{subfigure}
	
	\caption{Examples of the simplest strongly-stable (a) and weakly-stable (b) sequences. Their corresponding trees are the left tree in each pair while the right tree in each pair is an optimized static tree to serve the same sequence. Queried nodes are colored in blue. One can verify that $\hat{c}(X,GF) = 2$ and $\hat{c}(X',GF) = \frac{8}{3}$ while based on the optimized tree, $\hat{c}(X,OPT) \le \frac{3}{2}$ and $\hat{c}(X',OPT) \le \frac{5}{3}$.}
	\label{figure_basic_stable_trees}
\end{figure}

Motivated by the power of stable sequences over small trees, we proceed to a more general analysis of stable sequences.

\begin{definition}[Atomic Sequence]
\label{definition_notes_on_definition_atomic}
A tree $T$, along with stability type (weak/strong) for each node, and a subtree of each node to be accessed initially, induce a stable sequence. This sequence is unique up to its length, which can be extended indefinitely. We define the ``atomic unit'' of this sequence as the shortest sequence $X$ such that any repetition of $X$ is also a stable sequence that corresponds to $T$.
\end{definition}

Throughout the paper we work with whole multiples of the atomic sequence. Moreover, unless stated otherwise, we work with the atomic sequence itself (a single repetition).

\begin{lemma}
\label{lemma_atomic_length}
Let $X$ be a mixed-stable sequence with respect to a tree $T$. Then every leaf $u$ is visited once every $2^{a(u)} \cdot 3^{b(u)}$ queries where $a(u)$ and $b(u)$ are non-negative integers. In particular, the atomic length of $X$ is $2^{\max_{leaf\, u}{a(u)}} \cdot 3^{\max_{leaf\, u}{b(u)}}$ (the lcm). Moreover, if $X$ is strongly-stable then $\forall u: b(u) = 0$, and if $X$ is weakly-stable then $\forall u: a(u) = 0$.
\end{lemma}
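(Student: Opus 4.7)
The plan is to prove the lemma by a top-down induction on $T$. Define the \emph{visit frequency} $f(v)$ of a node $v$ as the fraction of queries of $X$ that lie in $v$'s subtree, so $f(\mathrm{root})=1$. Consecutive visits to any leaf $u$ are then spaced by exactly $1/f(u)$ queries, which is the desired period, and the atomic length of $X$ is the $\mathrm{lcm}$ of the leaf periods.

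The recursion for $f$ is immediate from Definition~\ref{definition_stable_improved}. If $v$ is strongly-stable with children $v_L,v_R$, alternation gives $f(v_L)=f(v_R)=f(v)/2$. If $v$ is weakly-stable with favored (strongly-stable) child $u$ and non-favored child $w$, the $3$-cycle forces $f(w)=f(u_L)=f(u_R)=f(v)/3$. Thus the natural ``building blocks'' of $T$ are either single strongly-stable nodes (splitting $f$ by $2$) or super-nodes that pair a weakly-stable node with its favored-child (splitting $f$ by $3$ into three outputs). Iterating down the tree, every leaf $u$ has $f(u)=1/(2^{a(u)}\cdot 3^{b(u)})$ where $a(u)$ counts standalone strongly-stable nodes on the root-to-$u$ path and $b(u)$ counts super-nodes along it. The $\mathrm{lcm}$ formula then follows from coprimality of $2$ and $3$, with tightness because each individual leaf period must divide the atomic length.

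For strongly-stable trees there are no weakly-stable nodes, so no super-nodes, which immediately yields $b(u)\equiv 0$. The weakly-stable case is the main obstacle: one must show that the favored-child pairing bijects weakly-stable nodes onto \emph{all} strongly-stable inner nodes, not merely onto some of them. This follows by counting---exactly half the inner nodes are weakly-stable, each has a unique favored-child, and distinct weakly-stable nodes have distinct favored-children (since a node has only one parent), so the injection is a bijection. In particular the root must be weakly-stable (it has no parent to favor it), and a simple descent argument shows that at any super-node each of its three children is either a leaf or a weakly-stable node, because a strongly-stable inner child would have to be some other weakly-stable node's favored-child, which is impossible. Therefore every root-to-leaf path decomposes entirely into super-nodes, giving $a(u)\equiv 0$ and completing the proof.
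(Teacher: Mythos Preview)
Your proof is correct and follows essentially the same frequency-recursion approach as the paper: both compute $f(u)=1/(2^{a(u)}3^{b(u)})$ by tracking, along the root-to-$u$ path, the strongly-stable nodes that are not favored-children (each contributing a factor $1/2$) and the weakly-stable nodes (each, together with its favored-child, contributing a factor $1/3$), then take the lcm of the resulting periods. Your treatment of the weakly-stable case is actually more careful than the paper's---the paper simply asserts parenthetically that no non-favored strongly-stable nodes exist when $X$ is weakly-stable, whereas you supply the counting/bijection argument explicitly.
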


\begin{proof}
Consider a leaf $u$. Define the frequency of visiting an ancestor $w$ of $u$ to be the frequency of accessing a leaf in the subtree of $w$. If $w$ is a strongly-stable ancestor then the frequency of visiting a child of $w$ is 
$\frac{1}{2}$ of the frequency of visiting $w$.
If $w$ is weakly-stable, $v$ is its favored-child, and $x$ is a child of $v$ then the frequency of visiting $x$ is $\frac{1}{3}$ of the frequency of visiting $w$. Similarly if 
$w$ is weakly-stable, $v$ is its  non-favored-child then the frequency of visiting $v$ is $\frac{1}{3}$ of the frequency of visiting $w$.
It follows that 
 $u$ is visited exactly once every $2^{a(u)} \cdot 3^{b(u)}$ queries where $a(u)$ is the number of strongly-stable nodes that are not favored-children (there are no such nodes if $X$ is weakly-stable), and $b(u)$ is the number of weakly-stable nodes (no such nodes if $X$ is strongly-stable), on the path to $u$. Finally, since every leaf $u$ is visited with a specific period, the whole sequence has a period which is the lcm of all periods.
\end{proof}

\begin{restatable}{lemma}{LemmaFixedStructure}
\label{lemma_fixed_structure}
Let $X$ be a mixed-stable sequence with respect to a tree $T$. If $GF$ serves $X$ with $T$ as initial tree, and breaks ties in favor of nodes of smaller-depth, then it never restructures $T$.
%%% Restate as "\LemmaFixedStructure*"
\end{restatable}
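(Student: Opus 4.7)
The plan is to prove, by induction on the query index $t$, that $T_t = T$ for all $t$. At the $t$-th query, which is to some leaf $u$ of $T$, $GF$ replaces the root-to-$u$ path in the current tree with the treap determined by the priorities $\tau(v_i)$ from Algorithm~\ref{alg_greedy_future}, and reattaches the hanging subtrees unchanged. Hence the inductive step reduces to showing that the shape of the path already in $T$ coincides with this treap when ties are broken in favor of the shallower node, so that no rotation is performed.

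We first establish a purely structural observation. Writing $p_0, p_1, \ldots, p_d = u$ for the path in depth order, the path viewed as an isolated subtree is a linear chain in which $p_j$ is extreme in value (either the maximum or the minimum) among $\{p_j, p_{j+1}, \ldots, p_d\}$. A short induction on $j$ proves this: once $p_{j-1}$ is extreme in its suffix, $p_j$ must lie in a specific subtree of $p_{j-1}$, and whether $p_j$ ends up the maximum or the minimum of its own suffix is determined by the turn taken at $p_j$. The consequence is that the treap produced by $GF$ reproduces the current path exactly when $\tau(p_0) \le \tau(p_1) \le \cdots \le \tau(p_d)$, where equalities are broken in favor of smaller depth.

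To verify this monotonicity we use the fact that the interval $(v_{i-1}, v_{i+1})$ assigned to $p_j = v_i$ is contained in the value range $(a_j, b_j)$ owned by $p_j$'s subtree in $T$: since $v_{i-1}$ and $v_{i+1}$ are path values, they are no farther from $p_j$ in value than the ancestor-imposed bounds $a_j$ and $b_j$. Consequently, any query that realises $\tau(p_j)$ descends into $p_j$'s subtree and lands either on $p_j$ itself (if $p_j = u$), on the off-path sibling of $p_{j+1}$ at $p_j$ (the ``outer'' hanging subtree), or on an ``inner'' hanging subtree sitting in the gap between $p_j$ and its closest path value-neighbour lying strictly below $p_j$ in the path. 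Using the periodic access pattern that the stability types impose on $p_j$'s subtree and its ancestors (as reflected in Lemma~\ref{lemma_atomic_length}), we then track when each of these subtrees is next visited and compare the resulting candidate times for $\tau(p_j)$ with those for $\tau(p_{j-1})$.

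The main obstacle lies in the case analysis for weakly-stable ancestors. When $p_{j-1}$ is strongly-stable, the alternation at $p_{j-1}$ forces the next visit to its opposite-side outer hanging to occur strictly before $p_j$'s next subtree entry, giving $\tau(p_{j-1}) < \tau(p_j)$. When $p_{j-1}$ is weakly-stable and $p_j$ is the favored child, one phase of the $3$-cycle causes the inner hanging subtree of $p_{j-1}$ to coincide with the outer hanging subtree of $p_j$, forcing $\tau(p_{j-1}) = \tau(p_j)$; here the depth-based tie-breaking is precisely what places $p_{j-1}$ above $p_j$ in the treap, matching the original path. The weak non-favored case and the symmetric left/right-biased cases reduce to strict inequalities by analogous period arguments. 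Together these scenarios at $p_{j-1}$ cover every possibility and yield either $\tau(p_{j-1}) < \tau(p_j)$ or a depth-resolved tie, which combined with the structural reduction proves that $GF$ never restructures $T$.
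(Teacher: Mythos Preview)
Your proposal is correct, but it takes a genuinely different route from the paper's proof. The paper argues by structural induction on $|T|$: it shows only that the root $r$ stays the root after each query (by a short case split on whether $r$ is strongly- or weakly-stable, checking that $\tau(r)$ is minimal, with the tie case arising exactly when $r$ is weakly-stable and the current access is to the left subtree of the favored child), and then observes that once the root is fixed, the restructuring inside each child subtree is exactly $GF$ applied to that subtree with the restricted subsequence, which is again mixed-stable, so the induction closes. Your approach instead inducts on the time index and proves the whole path is preserved at once, via the nice reduction to the monotonicity $\tau(p_0)\le\cdots\le\tau(p_d)$ along the depth-ordered path. This is more direct in that it avoids justifying the ``restrict to a subtree'' step that the paper glosses over, but it costs you a more involved local analysis: you must identify the interval $(v_{i-1},v_{i+1})$ at every depth, argue the containment in $(a_j,b_j)$, and track which hanging subtree next realizes each $\tau(p_j)$. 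The paper's decomposition localizes all of this to the root, where the interval computation is trivial. One small omission in your outline: in the weakly-stable favored-child case you only name the phase producing the tie (current access in the left subtree of the favored child); the other phase (current access in the right subtree of the favored child, after which the next access goes to the non-favored side) still needs the strict inequality $\tau(p_{j-1})<\tau(p_j)$, which follows by the same interval-tracking argument but should be stated.
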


\begin{proof}
The proof is by induction on the size of the tree. If $T$ has a single node, then it is trivial. Otherwise, the root $r$ is an inner-node, and we prove that it always remains the root. It then follows, by restricting the access sequence to values within each subtree, that the rest of the tree remains fixed as well. We use the notations of $\tau(v)$ and $v_i$ as in Algorithm~\ref{alg_greedy_future}.

First, consider the case that $r$ is a strongly-stable node (Definition~\ref{definition_stable_improved}). Given an access to some value $x$ in the left subtree of $r$, by definition, the next access would be to a value in the right subtree of $r$, hence $\tau(r) < \tau(v_i)$ for any $v_i \ne r$ on the path from $r$ to $x$, and therefore $GF$ will keep $r$ in the root. The same argument holds if $x$ is in the right subtree of $r$, and the next access is in the left subtree.

Next, consider the case that $r$ is a weakly-stable node. Without loss of generality, assume that it is left-biased, and denote its favored-child (left child) by $u$. Denote the left and right subtrees of $u$ by $A$ and $B$ respectively, and the right subtree of $r$ by $C$. The access pattern of subtrees is $ABC(ABC\ldots)$.
\begin{itemize}
    \item If the current access was to some $x \in A$, both $r$ and $u$ have been touched. The next access queries in $B$, so $\tau(u) = \tau(r) < \tau(v_i)$ for any $v_i \ne u,r$ on the access path to $x$. Since $GF$ tie-breaks in favor of smaller-depth, it will keep $r$ in the root.\footnote{This is the reason we defined this kind of access pattern as weakly-stable, because the stability can be chosen, but is not  forced. We emphasize that putting $u$ as a parent of $r$ will not make the next access cheaper as both $u$ and $r$ will be touched anyway, and then $r$ will be reinstated as the root.}

    \item If the current access was to some $x \in B$, then both $r$ and $u$ have been touched. The next access touches $C$, so $\tau(r) < \tau(v_i)$ for any $v_i \ne r$ on the access path to $x$, including $u$, thus $r$ must remain the root.

    \item If the current access was to some $x \in C$, since the next access touches $A$, $\tau(r) < \tau(v_i)$ for any $v_i \ne r$ on the access path to $x$, thus $r$ must remain the root. In this case $u$ was not touched, but nonetheless it remains the left child of $r$. \qedhere
\end{itemize}
\end{proof}

\begin{lemma}
\label{lemma_frequency}
If $X$ is a mixed-stable sequence, the frequency of accessing $x \in X$ is in the range of $[\frac{1}{3^{d(x)}},\frac{1}{3^{d(x)/2}}]$.
In particular, if $X$ is strongly-stable then the frequency equals $\frac{1}{2^{d(x)}}$.

\end{lemma}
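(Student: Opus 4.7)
The starting point is Lemma~\ref{lemma_atomic_length}, which already tells us that the period of $x$ is $2^{a(x)}\cdot 3^{b(x)}$, so its frequency is exactly $\frac{1}{2^{a(x)}\cdot 3^{b(x)}}$, where $a(x)$ counts the strongly-stable ancestors of $x$ that are not favored-children and $b(x)$ counts the weakly-stable ancestors of $x$. My plan is to complement $a(x)$ and $b(x)$ with a third quantity $c(x)$, namely the number of strongly-stable ancestors of $x$ that \emph{are} favored-children, and then derive both bounds from two very simple inequalities relating $a$, $b$, $c$.

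First I will record the two structural identities. Since every ancestor of $x$ is either weakly-stable, a strongly-stable non-favored-child, or a strongly-stable favored-child, we have $a(x)+b(x)+c(x)=d(x)$. Moreover, by Definition~\ref{definition_stable_improved}, every favored-child node has a weakly-stable parent, and distinct favored-children have distinct parents; when a favored-child is an ancestor of $x$, its parent is too. Hence $c(x)\le b(x)$, which combined with the identity gives $a(x)+2b(x)\ge d(x)$.

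For the lower bound on the frequency I would use the trivial estimate $2\le 3$, giving $2^{a(x)}\cdot 3^{b(x)}\le 3^{a(x)+b(x)}\le 3^{d(x)}$, so the frequency is at least $3^{-d(x)}$. For the upper bound I need to show $2^{a(x)}\cdot 3^{b(x)}\ge 3^{d(x)/2}$, i.e.\ $a(x)\log_3 2+b(x)\ge d(x)/2$. Because $\log_3 2>\tfrac12$ (equivalently $2^2>3$), it suffices to prove $\tfrac12 a(x)+b(x)\ge \tfrac12 d(x)$, which is just the inequality $a(x)+2b(x)\ge d(x)$ derived above, divided by two. The main (and really the only) ``obstacle'' is the bookkeeping that justifies $c(x)\le b(x)$; once that is in place, the two bounds fall out algebraically.

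Finally, for the strongly-stable case I will observe that by definition no ancestor is weakly-stable, so $b(x)=0$ and consequently $c(x)=0$ (a favored-child exists only under a weakly-stable node), forcing $a(x)=d(x)$ and giving the exact frequency $\frac{1}{2^{d(x)}}$.
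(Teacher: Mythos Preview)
Your proposal is correct and follows essentially the same approach as the paper: both hinge on the pairing between weakly-stable nodes and their strongly-stable favored children (the paper phrases this as ``every factor of $\tfrac{2}{3}$ is followed by $\tfrac{1}{2}$'', while you phrase it as the injection giving $c(x)\le b(x)$). The only cosmetic difference is that you cite Lemma~\ref{lemma_atomic_length} for the exact frequency $2^{-a(x)}3^{-b(x)}$ and then manipulate the exponents, whereas the paper re-derives the frequency as a product of per-ancestor factors and bounds the product directly.
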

\begin{proof}
The frequency of visiting a node depends on the path to it. The frequency is multiplied by $\frac{1}{2}$ when passing through a strongly-stable node, and multiplied by either $\frac{1}{3}$ or $\frac{2}{3}$ when passing through a weakly-stable node. Every factor of $\frac{2}{3}$ is followed by $\frac{1}{2}$, due to the strongly-stable favored-child of the weakly-stable node. Thus the frequency is bounded between $\frac{1}{3^{d(x)}}$ and $\frac{1}{2^{d(x)/2}} \cdot \big (\frac{2}{3} \big ) ^{d(x)/2} = \frac{1}{3^{d(x)/2}}$.
\end{proof}

\begin{corollary}
\label{corllary_cost_of_GF_formula}
Let $X$ be a strongly-stable sequence, then: $\hat{c}(GF,X) = \sum_{x \in X}{\frac{d(x) + 1}{2^{d(x)}}}$.
\end{corollary}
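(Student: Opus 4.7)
The plan is to combine Lemma~\ref{lemma_fixed_structure} and Lemma~\ref{lemma_frequency} directly; this is a routine consequence of the two. First, I would invoke Lemma~\ref{lemma_fixed_structure}, which ensures that when $GF$ serves a strongly-stable sequence $X$ starting from the corresponding tree $T$ (with the appropriate tie-breaking rule), it never performs any rotations. Hence at every time $t$ the current tree is exactly $T$, and by Definition~\ref{definition_avg_cost} together with Definition~\ref{definition_depth_in_tree} the cost of serving $x_t$ equals just $d(x_t) + 1$, since the ``restructuring'' set $U_t$ is empty.

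Next, I would rewrite the total cost as a sum over distinct queried items rather than over time steps. By Definition~\ref{definition_stable_improved} every query in a stable sequence accesses a leaf of $T$, so
\[
cost(GF,X) \;=\; \sum_{t=1}^{m}\bigl(d(x_t)+1\bigr) \;=\; \sum_{v\text{ leaf of }T} f(v)\cdot \bigl(d(v)+1\bigr),
\]
where $f(v)$ counts the number of times $v$ appears in $X$. Lemma~\ref{lemma_frequency} applied in the strongly-stable regime states that the frequency of $v$ is exactly $1/2^{d(v)}$, i.e.\ $f(v) = m/2^{d(v)}$ (this count is integral once $m$ is an integer multiple of the atomic length from Lemma~\ref{lemma_atomic_length}, which we assume throughout). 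Substituting and dividing by $m$ yields
\[
\hat{c}(GF,X) \;=\; \sum_{v\text{ leaf of }T} \frac{d(v)+1}{2^{d(v)}},
\]
which is precisely the stated formula (the summation index ``$x \in X$'' is read as ranging over the distinct items of $X$, which coincide with the leaves of $T$).

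There is no real obstacle here: the only things that need a moment of care are (i) that thanks to Lemma~\ref{lemma_fixed_structure} no cost beyond the search path itself is incurred, so we really do have $|P_t\cup U_t|=d(x_t)+1$, and (ii) the bookkeeping that turns a sum over time steps into a frequency-weighted sum over leaves, using the exact equality for frequencies that holds only in the strongly-stable case. The weak/mixed-stable case is not claimed in the corollary because the frequency bound in Lemma~\ref{lemma_frequency} is then only an interval, not a single value.
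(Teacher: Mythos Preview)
Your proposal is correct and is exactly the intended argument: the paper states this corollary immediately after Lemma~\ref{lemma_frequency} with no separate proof, treating it as a direct consequence of Lemma~\ref{lemma_fixed_structure} (static tree, so each access costs $d(x_t)+1$) together with the exact frequency $1/2^{d(x)}$ in the strongly-stable case. Your two points of care, that no restructuring cost is incurred and that the sum over time steps becomes a frequency-weighted sum over the leaves, are precisely what makes it immediate.
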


\subsection{Promotions and Recursive Trees}
The way in which we show our lower bounds relies on the fact that serving the leaves of a static tree is sub-optimal, since a trivial static optimization is to move the leaves closer to the root. We refer to this operation as a \emph{promotion} of the leaf that we move. We emphasize that for the purpose of our result, we analyze the improvement one gets from promotions, but the actual $OPT$, which is dynamic, may be able to reduce the cost further.

\begin{definition}[Promotion]
\label{definition_promotion}
Consider trees $T$ and $T'$. We say that a node $x$ was promoted  in $T'$ by $h$ (with respect to $T$), if  $d_T(x) - d_{T'}(x) = h$. Given a mixed-stable sequence $X$, the average promotion of $T$ to $T'$ is the weighted average promotion in $T'$ of the nodes of $T$, weighted by the query frequencies of the nodes.
\end{definition}

By definition, static optimization of a tree $T$ to $T'$ for a mixed-stable sequence $X$, implies a cost improvement for $OPT$ which is at least the average promotion of $T$ to $T'$, per query. Intuitively, promoting leaves that are closer to the root contributes more to the average promotion than promoting deeper leaves since the access frequencies decrease exponentially with depth. That being said, our promotion scheme will be relatively uniform, promoting most leaves by roughly the same amount, as in the following example.

\begin{example}
\label{example_delta_cost_1m}
To clarify promotions, consider Figure~\ref{figure_promotion_pattern_for_1_as_chain}. There, we can safely promote every node by one, except for one of the deepest nodes. Therefore, we immediately conclude that for the corresponding strongly-stable sequence $X$, we have:
$\hat{c}(GF,X) \ge \hat{c}(OPT,X) + (1 - \frac{1}{2^n})$.
\end{example}

\begin{figure}[ht]%[!ht]
	\centering
	\begin{subfigure}[t]{.33\textwidth} % Review-version was: \begin{subfigure}[t]{.25\textwidth} 
        \centering
        \includegraphics[width=\textwidth]{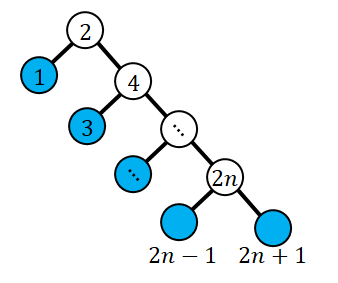}
        \caption{Before promotions.}
        \label{figure_promotion_pattern_for_1_as_chain_before} 
    \end{subfigure}
    \hspace{10mm} % '\hfill' for maximal distance. too much here...
    \begin{subfigure}[t]{.33\textwidth} % Review-version was: \begin{subfigure}[t]{.25\textwidth} 
        \centering
        \includegraphics[width=\textwidth]{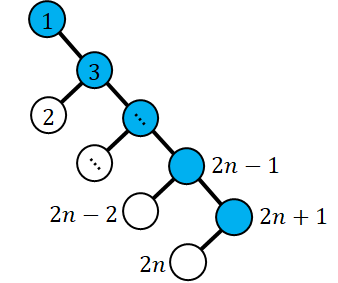}
        \caption{After promotions.}
        \label{figure_promotion_pattern_for_1_as_chain_after}
    \end{subfigure}
    
	\caption{(a) A tree which induces a strongly-stable sequence $X$, only blue nodes are queried. The frequency of querying an odd number $v=2i-1$ in this tree is $\frac{1}{2^i}$ except for $v=2n+1$ which has the same frequency as $v=2n-1$. (b) An improved static tree, in which each node except for one has been promoted one step closer to the root. The cost of serving $X$ over this tree is cheaper by almost $1$ per query.}
	\label{figure_promotion_pattern_for_1_as_chain}
\end{figure}

We define our trees using recursive structures.

\begin{definition}
\label{definition_recursive_structured_trees}
A recursive tree, $T_r$, of depth $r$ is defined by a specific full binary tree $T$ (independent of $r$) such that at least one of its leaves is an actual leaf, and some of its leaves are roots of  recursive trees, $T_{r-1}$, of depth $r-1$. We refer to the inner nodes of $T$ as the \emph{trunk} of $T_r$, and define
%a recursive tree of depth $0$
$T_0$ to be a single node. See Figure~\ref{figure_recursive_tree_pattern} for two examples.\footnote{The name of the pattern $F$ in Figure~\ref{figure_recursive_tree_pattern}, stands for Fibonacci: One can verify that for $r \ge 2$, the number of leaves at depth $1 \le d \le r - 1$ is the $(d-1)$th Fibonnaci number $F_{d-1}$ (we define $F_0 = 0$). Moreover, this can be used to prove the nice equation: $\sum_{d=0}^{\infty}{\frac{F_d}{2^d}} = 2$.}
\end{definition}

\begin{figure}[ht]%[!ht]
	\centering
	\includegraphics[width=0.45\textwidth]{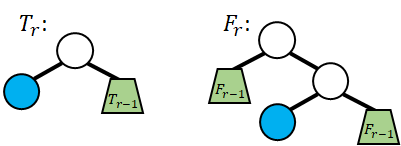} % Review-version was: \includegraphics[width=0.35\textwidth]
	\caption{Two recursive trees of depth $r$. Each of the trees $T$ and $F$ is a full binary tree with at least one actual leaf (in blue), and some hanging subtrees. At the bottom of the recursion (for $r=0$), the subtrees are nodes. Note that: (a) Expanding $T$ for $r=n$ results in the tree in Figure~\ref{figure_promotion_pattern_for_1_as_chain}; (b) The pattern $F$ is important for Theorem~\ref{theorem_multiplicative_2_lower_bound}.}
	\label{figure_recursive_tree_pattern}
\end{figure}

\subsection{Multiplicative Lower Bound for \emph{GF}}
\label{section_multiplicative_lower_bounds}

In this section we prove Theorem~\ref{theorem_multiplicative_2_lower_bound}. We do it by describing a concrete weakly-stable sequence, whose average cost per query is $6$ while an average promotion of $3$ is possible, resulting in an optimal cost of at most $3$. We start by stating a purely mathematical lemma that will be used in the analysis. 
%Its proof is in Appendix~\ref{appendix_section_missing_proofs}.

\begin{restatable}{lemma}{lemmaSemiGeometric}
\label{lemma_geometric_sequence_general}
Let $b_r$ be a sequence
%of numbers
defined by an initial value $b_0$ and the relation $b_r = \alpha \cdot b_{r-1} + \beta + \gamma \cdot \frac{r}{2^r}$ for some constants $\alpha,\beta,\gamma$ where $\alpha \ne \frac{1}{2},1$. Then $b_r =
\frac{\beta}{1-\alpha}(1-\alpha^r) +
\alpha^r \cdot b_0 +
\frac{2\alpha\gamma}{(2\alpha-1)^2} \cdot (\alpha^r - \frac{1}{2^r}) -
\frac{\gamma}{(2\alpha-1)} \cdot \frac{r}{2^r}
$.
In particular, when $\gamma = 0$ then $b_r = \frac{\beta}{1-\alpha} (1 - \alpha^r) + \alpha^r \cdot b_0$.
%%% Restate as "\lemmaSemiGeometric*"
\end{restatable}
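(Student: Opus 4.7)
The recurrence is linear and of first order, so by linearity I would split $b_r$ as $b_r = b_r^{(1)} + b_r^{(2)} + b_r^{(3)}$, where each component satisfies the recurrence driven by exactly one of the ``input'' terms on the right-hand side, with zero initial value except that $b_0^{(1)} = b_0$ absorbs the initial condition. Concretely, $b_r^{(1)} = \alpha \cdot b_{r-1}^{(1)}$ gives $b_r^{(1)} = \alpha^r b_0$; the recurrence $b_r^{(2)} = \alpha \cdot b_{r-1}^{(2)} + \beta$ with $b_0^{(2)} = 0$ is the standard geometric recursion whose solution $\frac{\beta}{1-\alpha}(1-\alpha^r)$ uses only $\alpha \ne 1$. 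So the first two summands of the claimed closed form are immediate.

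The real work is the third component, $b_r^{(3)} = \alpha \cdot b_{r-1}^{(3)} + \gamma \cdot \frac{r}{2^r}$ with $b_0^{(3)} = 0$. Unrolling this recurrence gives $b_r^{(3)} = \gamma \sum_{k=1}^{r} \alpha^{r-k} \cdot \frac{k}{2^k} = \gamma \alpha^r \sum_{k=1}^{r} k y^k$, where $y = \frac{1}{2\alpha}$ (this is where the hypothesis $\alpha \ne \tfrac12$ is used, since we need $y \ne 1$). The inner sum is a standard arithmetico-geometric sum, which I would evaluate via the telescoping trick $(1-y)\sum_{k=1}^r k y^k = \sum_{k=1}^r y^k - r y^{r+1}$, producing the closed form
\[
\sum_{k=1}^{r} k y^k \;=\; \frac{y(1-y^r)}{(1-y)^2} - \frac{r\, y^{r+1}}{1-y}.
\]
Substituting $y = \frac{1}{2\alpha}$ (so $1-y = \frac{2\alpha-1}{2\alpha}$) and simplifying, the factors of $2\alpha$ cancel neatly; multiplying through by $\gamma \alpha^r$ I expect to obtain exactly $\frac{2\alpha\gamma}{(2\alpha-1)^2}(\alpha^r - \frac{1}{2^r}) - \frac{\gamma}{2\alpha-1} \cdot \frac{r}{2^r}$, matching the last two terms of the asserted formula. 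Summing $b_r^{(1)} + b_r^{(2)} + b_r^{(3)}$ then yields the statement.

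The only subtle step is the algebraic simplification after substituting $y = \frac{1}{2\alpha}$: it is easy to drop a factor of $2\alpha$ or mis-track the sign, so I would carry out that computation carefully (or, alternatively, once the formula is conjectured, simply verify it by induction on $r$, checking the $r=0$ case and then plugging the closed form into both sides of $b_r = \alpha b_{r-1} + \beta + \gamma r / 2^r$). The special case $\gamma = 0$ is just the vanishing of $b_r^{(3)}$, recovering the textbook geometric recurrence solution. No additional hypotheses are needed beyond $\alpha \notin \{\tfrac12, 1\}$, which is exactly what the lemma assumes.
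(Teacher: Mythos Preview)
Your proposal is correct. The paper itself gives only a proof sketch, suggesting two routes: either verify the closed form by induction on $r$, or guess that $b_r$ differs from a pure geometric sequence $y_r = C\alpha^r$ by an ansatz of the form $p\cdot\frac{r}{2^r} + q\cdot\frac{1}{2^r} + s$ and solve for the undetermined coefficients $p,q,s$. Your approach is a third route: split by linearity into three summands and, for the $\gamma$-driven piece, unroll the recurrence and evaluate the resulting arithmetico-geometric sum in closed form. This is more computational than the paper's ansatz method (which sidesteps the summation entirely by positing the shape of the answer), but it has the advantage of \emph{deriving} the formula rather than merely verifying a guess. You also mention induction as a sanity check, which is exactly one of the paper's suggestions. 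All three methods are standard for first-order linear recurrences with a forcing term; none is materially deeper than the others, and your algebra checks out.
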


\begin{proof}[Proof Sketch]
Either use induction, or ``guess'' that a geometric sequence $y_r$ with a multiplier of $\alpha$ satisfies $y_r = p \cdot \frac{r}{2^r} + q \cdot \frac{1}{2^r} + s + b_r$, and determine the fixed coefficients $p,q,s$.
\end{proof}

\begin{lemma}
\label{lemma_fibo_promotion_weakly_stable}
Let $X$ be a weakly-stable sequence implied by the recursive tree $F_r$ in Figure~\ref{figure_recursive_tree_pattern}, where the root is a weakly-stable node with a right-bias. Then for any $\epsilon > 0$, there is a sufficiently large recursive depth $r$ such that (1) $\hat{c}(GF,X) > 6 - \epsilon$, (2) a static optimization of the tree saves an average cost of at least $3-\epsilon$, and (3) regardless of $r$, $\hat{c}(OPT,X) < 3$.
\end{lemma}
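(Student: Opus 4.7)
By Lemma~\ref{lemma_fixed_structure}, $GF$ keeps the tree $F_r$ fixed throughout the serving of $X$, so $\hat{c}(GF,X) = \sum_{\ell} f(\ell)\,(d(\ell)+1)$ summed over the leaves $\ell$ of $F_r$ with their access frequencies $f(\ell)$ and depths $d(\ell)$. I would decompose $F_r$ into its trunk $F$ (with queried blue leaves at trunk-frequencies $f_i^T$ and trunk-depths $d_i^T$) and its hanging copies of $F_{r-1}$ at the recursive ``slots'' (with trunk-frequencies $f_j^P$ and trunk-depths $d_j^P$). A leaf at depth $d$ inside the $j$-th hanging $F_{r-1}$ has total depth $d_j^P + d$ and total frequency $f_j^P$ times its frequency inside $F_{r-1}$, so summing splits the cost neatly:
$$c_r := \hat{c}(GF,F_r) \;=\; \Big(\sum_i f_i^T(d_i^T+1) + \sum_j f_j^P\,d_j^P\Big) \;+\; c_{r-1}\sum_j f_j^P \;=\; \alpha\,c_{r-1} + \beta,$$
with base case $c_0=1$.

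The frequencies $f_i^T, f_j^P$ are forced by Definition~\ref{definition_stable_improved} and Lemma~\ref{lemma_frequency}: the root, being weakly-stable with right-bias, sends $\tfrac{1}{3}$ of the mass to its left child and $\tfrac{2}{3}$ to its strongly-stable favored right child (which then splits $\tfrac{1}{2},\tfrac{1}{2}$); the remaining inner nodes of the trunk $F$ carry weak/strong stability types chosen consistently with the Fibonacci recursive shape so that Lemma~\ref{lemma_fixed_structure} applies. From the concrete structure of $F$ one extracts specific $\alpha<1$ and $\beta$ for which $\beta/(1-\alpha)=6$; then Lemma~\ref{lemma_geometric_sequence_general} (with $\gamma=0$) yields $c_r = 6 - (6-c_0)\alpha^r \to 6$, giving $\hat{c}(GF,X) > 6-\epsilon$ for all sufficiently large $r$, proving (1).

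For (3), I would exhibit an explicit static tree $T^\ast$ with $\hat{c}(T^\ast,X) < 3$ uniformly in $r$. The construction promotes the highest-frequency leaves towards the root: the three ``thirds'' produced at the root (the left subtree, and the two grandchildren via the favored right child) each contribute one representative leaf to the top levels of $T^\ast$, and inside each third the construction is applied recursively. The cost $p_r := \hat{c}(T^\ast,X)$ satisfies a recursion of the same shape $p_r = \alpha\,p_{r-1} + \beta'$ (possibly plus an $O(r/2^r)$ correction absorbed by the full form of Lemma~\ref{lemma_geometric_sequence_general}), and a direct computation shows the limit is strictly below $3$, so $p_r<3$ for every $r$. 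The upfront rearrangement cost from $F_r$ to $T^\ast$ is $O(n)$ and averages out over the repeated atomic sequence, so $\hat{c}(OPT,X)\le\hat{c}(T^\ast,X)<3$. Claim (2) follows by subtraction: the average promotion from $F_r$ to $T^\ast$ equals $c_r - p_r > (6-\epsilon)-3 = 3-\epsilon$ for $r$ large enough.

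The main obstacle is the careful bookkeeping of frequencies through the weak/strong inner nodes of the trunk $F$: every root-to-leaf path multiplies factors in $\{\tfrac12,\tfrac13,\tfrac23\}$ depending on the stability type and bias, and the Fibonacci-like shape means that the trunk already contains several levels of these multipliers. Once the stability assignment that makes Lemma~\ref{lemma_fixed_structure} applicable is fixed, computing $\alpha,\beta$ (and their primed counterparts for $T^\ast$) reduces to a finite verification that the limits are exactly $6$ and below $3$; the asymptotic step via Lemma~\ref{lemma_geometric_sequence_general} is then routine.
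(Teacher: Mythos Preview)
Your overall architecture matches the paper's: set up a linear recurrence $c_r=\alpha c_{r-1}+\beta$ for $\hat c(GF,X)$, build an explicit promoted tree, set up a companion recurrence for its cost (equivalently, for the average promotion), and solve both via Lemma~\ref{lemma_geometric_sequence_general}. Where you diverge from the paper is in the bookkeeping, and there the proposal has two concrete issues.

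First, you are overcomplicating the trunk. The pattern $F$ has exactly two inner nodes: the root (weakly-stable, right-biased) and its favored right child (strongly-stable by Definition~\ref{definition_stable_improved}). There are no further ``several levels of these multipliers'' to track inside the trunk; the Fibonacci-like growth is a feature of the recursion across levels, not of a single copy of $F$. With this in hand the three top-level access frequencies are each $\tfrac13$ (left subtree of the root, and the two grandchildren through the favored child), and the recurrence drops out immediately: $c_r=\tfrac13(c_{r-1}+1)+\tfrac13\cdot 3+\tfrac13(c_{r-1}+2)=\tfrac23 c_{r-1}+2$, $c_0=1$, hence $c_r=6-5(2/3)^r$. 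The paper's promoted tree $F'_r$ simply puts the explicit leaf at the root with the two $F'_{r-1}$ as its children, giving promotion $p_r=\tfrac23 p_{r-1}+1$, $p_0=0$, so $p_r=3(1-(2/3)^r)$.

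Second, your argument for (3) is not sound as written. The cost on the promoted tree is $c_r-p_r=3-2(2/3)^r$, whose limit is exactly $3$, not ``strictly below $3$''. Moreover, ``limit $<3$'' would not by itself imply every term is $<3$. What actually gives (3) is the explicit closed form: $3-2(2/3)^r<3$ for every $r$. Once you correct this, (2) follows by subtraction exactly as you say. So the plan is right, but the deferred ``finite verification'' is where all the content lives, and it is short enough that it should be carried out rather than sketched.
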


\begin{proof}
Let $c_r$ denote the  average cost of serving $X$ with $F_r$. Then $c_0 = 1$ and $c_r = \frac{1}{3}(c_{r-1}+1) + \frac{1}{3} \cdot 3 + \frac{1}{3}(c_{r-1} + 2) = \frac{2}{3} c_{r-1} + 2$, which yields by Lemma~\ref{lemma_geometric_sequence_general} that $c_r = \frac{2}{1-2/3}(1-(2/3)^r) + (2/3)^r \cdot 1 = 6 \cdot (1-(2/3)^r) + (2/3)^r$. To analyze the average promotion, we re-structure $F_r$ to a new static structure $F'_r$ as follows, see Figure~\ref{figure_promotion_fibo_k2}. The leaf is moved to the root, whose children are the recursive subtrees, optimized themselves by the same logic. The old root is moved to be a right child of the maximal value in the new left subtree, and the old right-child (of the old-root) is moved to be a left child of the minimal value in the new right subtree. $F'_r$ maintains the order of values as was in $F_r$. The demotions of the old root and its right child do not affect the cost, because $X$ does not query these values. Denote by $p_r$ the average promotion of $F_r$  to $F'_r$. Then $p_0 = 0$ since nothing is promoted for a singleton, and $p_r = \frac{1}{3} p_{r-1} + \frac{1}{3} \cdot 2 + \frac{1}{3} (p_{r-1}+1) = \frac{2}{3}p_{r-1} + 1$. Again by Lemma~\ref{lemma_geometric_sequence_general} we get that $p_r = \frac{1}{1-2/3} (1 - (2/3)^r) + (2/3)^r \cdot 0 = 3 \cdot (1 - (2/3)^r)$.
Observe that for $r \to \infty$ we get that $c_r \to 6$ and $p_r \to 3$, thus parts (1) and (2) of the claim follow. For part (3), observe that $c_r - p_r = 6 \cdot (1-(2/3)^r) + (2/3)^r - 3 \cdot (1 - (2/3)^r) = 3 -2 \cdot (2/3)^r < 3$.
\end{proof}

% NOTE: An intuitive and easy way to analyze the average cost of a recursive tree, or the average promotion, is to recall that for $r \to \infty$, $T_r$ and $T_{r-1}$ are ``essentially the same''. This means that instead of solving an exact recursive relation, we can just solve an equation with one variable, $c_\infty$ (same for $p_\infty$) which the limit of $c_r$ (pr $p_r$) must satisfy. For example, a shortcut to proving that $\hat{c}(GF,X) \to 6$ in Lemma~\ref{lemma_fibo_promotion_weakly_stable} would be to solve $c_\infty = \frac{2}{3} c_{\infty} + 2 \Rightarrow c_\infty = 6$.

\begin{figure}[ht]%[!ht]
	\centering
	
	\begin{subfigure}[t]{.42\textwidth} % Review-version was: \begin{subfigure}[t]{.3\textwidth} 
        \centering
        \includegraphics[width=\textwidth]{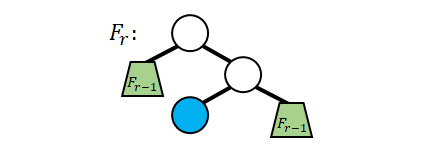}
        \caption{$F$-tree pattern.}
        \label{figure_promotion_fibo_k2_pattern} 
    \end{subfigure}
    %\hspace{10mm}
    \begin{subfigure}[t]{.41\textwidth} % Review-version was: \begin{subfigure}[t]{.3\textwidth} 
       \centering
       \includegraphics[width=\textwidth]{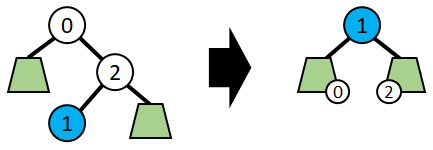}
       \caption{Promotion scheme.}
       \label{figure_promotion_fibo_k2_promotion}
    \end{subfigure}
	
	\caption{The $F$-tree pattern and its promotion scheme in Lemma~\ref{lemma_fibo_promotion_weakly_stable}. Only the top-level promotions are presented in (b), but more promotions are done recursively within each subtree.}
\label{figure_promotion_fibo_k2}
\end{figure}

\theoremMultiplicativeTwoLowerBound*

\begin{proof}
Assume by contradiction that $GF$ is $(2-\delta,f(m))$-competitive for some $\delta>0$ and a function $f(m) = o(m)$. 
Let $X'$ be a sequence that consists of $s$ repetitions of the atomic weakly-stable sequence that corresponds to the recursive tree $F_r$. It follows that $\hat{c}(GF,X') \le (2-\delta) \cdot \hat{c}(OPT,X') + \frac{f(|X'|)}{|X'|}$.
By Lemma~\ref{lemma_fibo_promotion_weakly_stable}, we can choose $r$ large enough such that $\hat{c}(GF,X') > 6 - \delta$, and regardless of $r$, $\hat{c}(OPT,X') < 3$. Then, since $f$ is sub-linear, we can choose the number of repetitions $s$ to be large enough such that $\frac{f(|X'|)}{|X'|} < 2\delta$. But then we also get that $\hat{c}(GF,X') < (2-\delta) \cdot 3 + 2\delta = 6 - \delta$, which is a contradiction.
\end{proof}

By
Analyzing mixed-stable sequences we proved a lower bound of $2$ on the competitve ratio of $GF$. Theorem~\ref{theorem_2_competitive_on_stable_sequences} gives an upper bound.

\begin{restatable}{theorem}{theoremTwoCompetitiveOnStableSequences}
\label{theorem_2_competitive_on_stable_sequences}
Let $X$ be a mixed-stable sequence and let $T$ be the tree that corresponds to it. Then $cost(GF,X,T) < c \cdot cost(OPT,X,T)$ for $c = \frac{5}{2}$.
If $X$ is strongly-stable, then $c = 2$.
%%% Restate as "\theoremTwoCompetitiveOnStableSequences*"
\end{restatable}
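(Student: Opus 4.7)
The plan is to use Lemma~\ref{lemma_fixed_structure} to express $cost(GF,X,T)$ exactly in terms of the depths of leaves of $T$, and to lower-bound $cost(OPT,X,T)$ by combining the trivial $\hat{c}(OPT)\ge 1$ with an information-theoretic entropy bound $\hat{c}(OPT) \ge H(\vec{f})$, where $H(\vec{f}) = \sum_x f_x \log_2(1/f_x)$ and $f_x$ is the query frequency of leaf $x$. Since $X$ is mixed-stable, Lemma~\ref{lemma_fixed_structure} implies that $GF$ maintains $T$ throughout, so every query of leaf $x$ costs exactly $d_T(x)+1$; normalizing by $m$ yields $\hat{c}(GF,X,T) = 1+\sum_x f_x\,d_T(x)$.

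For the strongly-stable case, Lemma~\ref{lemma_frequency} gives $f_x = 2^{-d_T(x)}$, so $\log_2(1/f_x) = d_T(x)$ and the entropy coincides with the weighted depth: $H = \sum_x f_x d_T(x)$. Hence $\hat{c}(GF) = 1+H$, and dividing by $\hat{c}(OPT) \ge \max(1,H)$ gives
\[
\frac{\hat{c}(GF)}{\hat{c}(OPT)} \;\le\; \frac{1+H}{\max(1,H)} \;\le\; \begin{cases} 1+1/H \le 2 & \text{if } H\ge 1,\\ 1+H < 2 & \text{if } H<1.\end{cases}
\]
Strict inequality at the borderline $H = 1$ follows by checking the only such tree directly (three nodes, two queried leaves, as in Figure~\ref{figure_basic_stable_trees}), where $\hat{c}(OPT) = 3/2 > 1$ and the ratio is $4/3 < 2$.

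For the mixed-stable case, Lemma~\ref{lemma_frequency} gives $f_x = 1/(2^{a(x)}\cdot 3^{b(x)})$, and the depth decomposes as $d_T(x) = a(x) + b(x) + e(x)$, where $e(x)\le b(x)$ is the number of favored-children on the root-to-$x$ path (each weakly-stable ancestor contributes at most one such favored child). Hence $d_T(x) \le a(x)+2b(x)$, and since $\log_2(1/f_x) = a(x) + b(x)\log_2 3$ with $\log_2 3 < 2$, elementary algebra yields the pointwise bound $d_T(x) \le (2/\log_2 3)\,\log_2(1/f_x)$, and summing with weights $f_x$ gives $\sum_x f_x d_T(x) \le (2/\log_2 3)\,H$. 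Combined with $\hat{c}(OPT) \ge \max(1,H)$, the ratio is at most $1 + 2/\log_2 3 \approx 2.26$, which is strictly below $5/2$.

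The hardest step is to justify $\hat{c}(OPT) \ge H$ for an offline (sequence-aware) BST on a specific mixed-stable sequence, since the textbook Shannon argument only bounds the expected cost on a \emph{random} sequence. I would establish it by invoking Wilber's first lower bound with $T$ itself as the reference tree: at each inner node $y$ of $T$ the stable access pattern forces an alternation between $y$'s left and right subtree at essentially every access to the subtree of $y$, contributing $\Theta(m\cdot 2^{-d_T(y)})$ alternations; summing over all internal nodes of $T$ gives $\Theta(m\cdot H)$ (as can be verified by induction on $T$, using $\sum_{\text{inner }y}2^{-d_T(y)} = H$ on strongly-stable trees and an analogous identity in the mixed case). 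With this lower bound in hand, the remainder of the proof is the elementary algebra outlined above.
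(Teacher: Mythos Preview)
The central gap is the inequality $\hat{c}(OPT)\ge H$, which is \emph{false} for strongly-stable sequences in general. Theorem~\ref{theorem_additive_loglogn} produces strongly-stable sequences with $\hat{c}(GF)-\hat{c}(OPT)=\Omega(\lg\lg n)$; since on any strongly-stable sequence $\hat{c}(GF)=1+H$ exactly, this forces $\hat{c}(OPT)\le H+1-\Omega(\lg\lg n)<H$ for large $n$. The Wilber bound you invoke does not rescue the claim: it carries a leading factor $\tfrac12$, so with $T$ as reference tree and full alternation at every inner node one obtains only
\[
\hat{c}(OPT)\;\ge\;1+\tfrac12\sum_{\text{inner }u}f(u)-O(n/m)\;=\;1+\tfrac{H}{2}-O(n/m),
\]
not $\hat{c}(OPT)\ge H$. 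Your ``$\Theta(mH)$'' hides precisely this factor of $\tfrac12$.

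Once you plug in the correct Wilber output $1+H/2$, the strongly-stable ratio becomes $(1+H)/(1+H/2)<2$, and this is exactly the paper's proof: the paper writes $\sum_{u}f(u)=\sum_{x}f_x\,d(x)$ directly rather than calling it $H$, but the computation is the same. So for the strongly-stable case your plan, after fixing the constant, collapses to the paper's argument; the entropy packaging adds nothing.

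For the mixed-stable case the entropy detour does not reach $c=\tfrac52$. With $\hat{c}(OPT)\ge H$ unavailable, you are back to Wilber, and at a weakly-stable node only two out of three visits are alternations ($\alpha_u=\tfrac23$), so naively $\hat{c}(OPT)\ge 1+\tfrac13\sum_u f(u)$, yielding only $c=3$. The paper improves this to $\tfrac52$ by a rebalancing trick you are missing: every weakly-stable node $u$ has a strongly-stable favored child $w$ with $ALT(u)=ALT(w)$, so one may redistribute weight between their two terms in Wilber's sum, replacing the pair $(\alpha_u,\alpha_w)=(\tfrac23,1)$ by $(\tfrac45,\tfrac45)$ and obtaining $c=2/(4/5)=\tfrac52$. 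Your pointwise bound $d_T(x)\le(2/\log_2 3)\log_2(1/f_x)$ is correct but cannot substitute for this step once the false lower bound on $\hat{c}(OPT)$ is removed.
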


We defer the proof of Theorem~\ref{theorem_2_competitive_on_stable_sequences} to Appendix~\ref{appendix_section_missing_proofs}. 
%We note that Theorem~\ref{theorem_multiplicative_2_lower_bound} shows a lower-bound of $2$ while Theorem~\ref{theorem_2_competitive_on_stable_sequences} shows an upper-bound of $\frac{5}{2}$ (on the competitive ratio, for weakly-stable sequences). 
The upper-bound in Theorem~\ref{theorem_2_competitive_on_stable_sequences} is clearly not tight, since in the proof of Theorem~\ref{theorem_2_competitive_on_stable_sequences} we neglected a term using the inequality $\hat{c}(GF,X) \le \frac{2}{\alpha} \cdot \hat{c}(OPT,X) - \frac{1}{\alpha} \big (1 - \frac{n-1}{2m} \big ) < \frac{2}{\alpha} \cdot \hat{c}(OPT,X)$, for a constant $\alpha$. The lack of tightness is more prominent when $\hat{c}(OPT,X)$ is small, like in the sequence studied in Lemma~\ref{lemma_fibo_promotion_weakly_stable} (for Theorem~\ref{theorem_multiplicative_2_lower_bound}). We suspect that the lower bound in Theorem~\ref{theorem_multiplicative_2_lower_bound} is tight, and more strongly, that the $F$-tree pattern is the best pattern to use. This is based on studying several other recursive patterns, including those in Figure~\ref{figure_recursive_tree_pattern} and Figure~\ref{figure_k_r_tree_structure_definition}: None was stronger, and it also seems that patterns with large costs do not ``compensate'' with large enough promotions.

As a closing remark to the multiplicative results, we note that by the static optimality theorem for $GG$ \cite{Fox2011GGAccessLemma}, competitive analysis against a \emph{static} 
algorithm (i.e.\ an algorithm that does not change its initial tree)
 cannot  show a super-constant lower bound.
Concretely, the theorem states that $cost(GF,X) \equiv cost(GG,X) = O(m + \sum_{i=1}^{n}{n_i \lg \frac{m}{n_i}})$
%. This is derived from the %\emph{access lemma} there, 
and one can verify that the actual constants are $5m + 6 \sum_{i=1}^{n}{n_i \lg{\frac{m}{n_i}}}$. This bound can be re-written as $5m + 6m \cdot H_2(X)$ where $H_2(X) = \sum_{i=1}^{n}{\frac{n_i}{m} \lg \frac{m}{n_i}}$ is the base-$2$  entropy of the frequencies of the values in $X$. By~\cite{StaticNearOptimalTreeMehlhorn}, $cost(OPT^s,X) \ge m \cdot \frac{H_2(X)}{\lg 3}$ where $OPT^s$ is the static optimum, and therefore $cost(GF,X) \le (5 + 6 \lg 3) \cdot cost(OPT^s,X)$.
Thus, no static argument can  show a lower bound larger than $\approx 11.59$.

\subsection{Additive Lower Bounds for \emph{GF}}
\label{section_additive_cost}

In this section we move on to analyze the additive gap between $GF$ and $OPT$. For this, we construct and analyze more elaborate patterns of recursively-defined trees, in order to get a large average promotion when optimizing the structure of the trees. The analysis is more involved since we  cannot simply assume that the depth of the recurrence, $r$,  approaches infinity. Here  $n$ is a function of $r$ and the difference of cost can be meaningful in terms of $n$ only if $n$ is finite.

\begin{definition}
\label{definition_tree_pattern}
For $k \ge 2$, and $r \ge 0$ we define a $(k,r)$-tree $T_r$ as follows.   The tree is recursive of depth $r$ (as in Definition~\ref{definition_recursive_structured_trees}), such that its trunk is composed of a root and a left-chain of length $k-1$ that starts in the right-child of the root. The left child of the deepest node of the trunk is an actual leaf, and the rest of the leaves are $T_{r-1}$ subtrees. $T_0$ is a single node. See Figure~\ref{figure_k_r_tree_structure_definition}. When $k$ is clear from the  context, we also refer to the tree as $T_r$.
\end{definition}

Observe that the tree $F_r$ that was used to prove Theorem~\ref{theorem_multiplicative_2_lower_bound} is in fact a $(k,r)$-tree with $k=2$. When we conclude the analysis, we will get the two ends of a ``tradeoff'' such that on the one end we have a relatively high cost ratio, and on the other a relatively high cost difference. Moreover, we will show that the higher the difference of costs on a sequence induced by $(k,r)$-tree, the closer the cost ratio is to $1$ (comparing $GF$ to $OPT$).

\begin{figure}[t]%[!ht]
	\centering
	\begin{subfigure}[t]{.25\textwidth} % Review-version was: \begin{subfigure}[t]{.22\textwidth} 
        \centering
        \includegraphics[width=\textwidth]{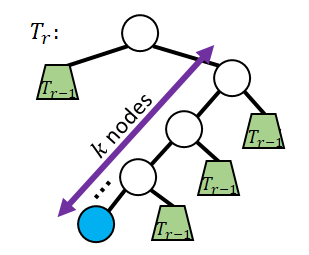}
        \caption{$(k,r)$-tree pattern.}
        \label{figure_k_r_tree_structure_definition_pattern} 
    \end{subfigure}
    \hspace{5mm}
    \begin{subfigure}[t]{.69\textwidth} % Review-version was: \begin{subfigure}[t]{.61\textwidth} 
       \centering
       \includegraphics[width=\textwidth]{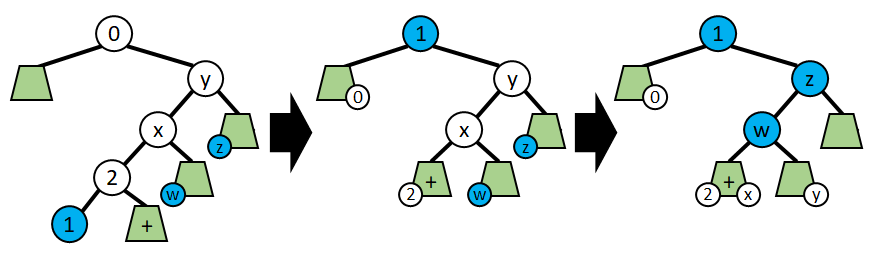}
       \caption{Promotion scheme. The main gain is due to the first step.}
       \label{figure_k_r_tree_structure_definition_promotion}
    \end{subfigure}
    
	\caption{(a) The recursive pattern of a $(k,r)$-tree, $T_r$. The \emph{trunk} of the tree has $k$ nodes: the root, and a chain of $k-1$ nodes leading to an actual leaf. The rest of the leaves are $(k,r-1)$-trees. (b) The promotion scheme used later in Lemma~\ref{lemma_average_promotion}, exemplified for $k=4$ (see also Figure~\ref{figure_promotion_fibo_k2} for the degenerate case of $k=2$). The main gain is from the first step of promoting the actual leaf to the root, and its sibling subtree (marked with $+$) one step upwards. Additional gain is achieved by promoting the left-most node of each hanging right subtree to the trunk at the expense of demoting trunk nodes. More promotions are done recursively within each subtree. The nodes marked $0,1,2$ are indeed consecutive, and also: $2<x$ and $x+1=w<y=z-1$.
	%A $(k,0)$-tree is a single node.
	}
	\label{figure_k_r_tree_structure_definition}
\end{figure}

\begin{lemma}
\label{lemma_deepest_node}
The depth of a $(k,r)$-tree is $k \cdot r$, and its left-most node is at depth $r$.
\end{lemma}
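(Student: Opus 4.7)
The plan is a straightforward induction on the recursion depth $r$, proving both parts simultaneously. The base case $r = 0$ is immediate: $T_0$ is a single node at depth $0$, which matches both $k \cdot 0$ and $r = 0$.

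For the inductive step, I would first describe the trunk explicitly. By Definition~\ref{definition_tree_pattern} the $k$ trunk nodes sit at depths $0, 1, \ldots, k-1$: the root at depth $0$, then the left-chain of length $k-1$ starting at the root's right-child occupies depths $1, 2, \ldots, k-1$. The non-trunk children of trunk nodes — the ``attachment points'' for subtrees or the actual leaf — then sit at depths $1, 2, \ldots, k$, namely: the left child of the root at depth $1$; the right children of the $k-1$ chain nodes at depths $2, 3, \ldots, k$; and the left child of the deepest chain node at depth $k$, which is the unique actual leaf. All other attachment points are roots of $T_{r-1}$ subtrees.

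For the depth claim, I would invoke the inductive hypothesis: each attached $T_{r-1}$ has depth $k(r-1)$, so a subtree whose root lies at depth $d$ contributes nodes down to depth $d + k(r-1)$. Maximizing over $d \in \{1, \ldots, k\}$ gives $k + k(r-1) = kr$. The actual leaf at depth $k$ does not exceed this (since $k \le kr$ for $r \ge 1$), so the overall depth is exactly $kr$.

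For the left-most node, the left-most root-to-node path first steps from the root to its left child, reaching depth $1$ at the root of a $T_{r-1}$ subtree. Continuing into this subtree, the inductive hypothesis places its left-most node at depth $r-1$ within it, giving absolute depth $1 + (r-1) = r$. The only point that needs care is correctly identifying the attachment depths and verifying that the actual leaf at depth $k$ does not dominate the deepest attached $T_{r-1}$; both follow directly from the definition, so there is no real obstacle.
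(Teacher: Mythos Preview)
Your proposal is correct and follows essentially the same inductive approach as the paper, which simply notes the claim is ``trivial by induction'' and records that the deepest $T_{r-1}$ subtree is rooted at depth $k$ while the left-most path adds one per recursive level. Your version just spells out the attachment depths of the trunk's children more explicitly, which is fine but not a different method.
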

\begin{proof}
Trivial by induction: For $r=0$, the deepest node is the root, at depth $0$. For $r \ge 1$, observe that the deepest node belongs to the deepest subtree $T_{r-1}$, which is rooted at depth $k$ since the path to it includes $k$ trunk nodes. Similarly, the depth of the left-most node is increased by $1$ per recursive level of the tree.
\end{proof}

\begin{lemma}
\label{lemma_size_of_tree}
Let $T_r$ be a $(k,r)$-tree. Then $|T_r| = (2 + \frac{2}{k-1}) k^r - (1 + \frac{2}{k-1})$ where $|T_r|$ is the number of nodes in $T_r$. In rougher terms, $|T_r| = \Theta(k^r)$.
\end{lemma}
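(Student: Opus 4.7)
The plan is to establish the recurrence $|T_r| = k \cdot |T_{r-1}| + (k+1)$ with base case $|T_0| = 1$, and then solve it using Lemma~\ref{lemma_geometric_sequence_general}.

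To derive the recurrence I would carefully enumerate the children of the $k$ trunk nodes that do not themselves lie on the trunk. Unpacking Definition~\ref{definition_tree_pattern}: the root has its right child on the trunk but its left child free; each of the $k-2$ middle chain nodes has its left child on the trunk and its right child free; and the deepest trunk node has no trunk children at all, contributing two free slots (left and right). Summing gives $1 + (k-2) + 2 = k+1$ ``free'' positions that hang off the trunk. The definition then designates exactly one of them --- the left child of the deepest trunk node --- as an actual leaf, while the remaining $k$ positions each hold a $T_{r-1}$ subtree. Adding the $k$ trunk nodes, one actual leaf, and $k$ recursive subtrees of size $|T_{r-1}|$ yields the claimed recurrence.

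To solve the recurrence I would invoke Lemma~\ref{lemma_geometric_sequence_general} with $\alpha = k$, $\beta = k+1$, $\gamma = 0$, and $b_0 = 1$; the excluded values $\alpha = \tfrac{1}{2}, 1$ are safely avoided because $k \ge 2$. The $\gamma = 0$ branch of the lemma immediately gives $|T_r| = \frac{k+1}{1-k}(1-k^r) + k^r$, and a short algebraic rearrangement (combining over the common denominator $k-1$ and using $\frac{2k}{k-1} = 2 + \frac{2}{k-1}$ together with $\frac{k+1}{k-1} = 1 + \frac{2}{k-1}$) converts this into the stated closed form $(2 + \frac{2}{k-1})k^r - (1 + \frac{2}{k-1})$. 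The asymptotic estimate $|T_r| = \Theta(k^r)$ then follows since the leading coefficient $\frac{2k}{k-1}$ lies in $[2,4]$ for all $k \ge 2$. The only place requiring genuine care is the combinatorial accounting that distinguishes the single hard-coded leaf from the $k$ recursive subtree slots; the rest is a routine linear recurrence.
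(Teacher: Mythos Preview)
Your proposal is correct and follows essentially the same approach as the paper: establish the recurrence $|T_r| = k\,|T_{r-1}| + (k+1)$ with $|T_0|=1$, then apply Lemma~\ref{lemma_geometric_sequence_general} with $\gamma=0$ to obtain the closed form. You simply spell out the combinatorial accounting and the algebraic rearrangement in more detail than the paper does.
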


\begin{proof}
Denote $n_r = |T_r|$. By definition, $n_0 = 1$ and $n_r = (k+1) + k \cdot n_{r-1}$. Hence, by Lemma~\ref{lemma_geometric_sequence_general} (with $\gamma = 0$): $n_r =  \frac{k+1}{1-k} (1 - k^r) + k^r = (2 + \frac{2}{k-1}) k^r - (1 + \frac{2}{k-1})$.
\end{proof}

\begin{restatable}{lemma}{LemmaAveragePromotion}
\label{lemma_average_promotion}
Let $X$ be any mixed-stable sequence corresponding to a $(k,r)$-tree $T_r$. 
Denote the average weighted promotion possible in $T_r$ by $p_r$, where weighting is according to the frequency of querying each leaf. Then $p_r > k \cdot (1 - \alpha^r)$ for $\alpha = 1 - \frac{1}{3^k}$. In particular, if $X$ is a strongly-stable sequence, then $p_r = (k + 1) \cdot (1 - \alpha^r) + \delta$ for $\alpha = 1 - \frac{1}{2^k}$ and $0 \le \delta < \alpha^r$.
%%% Restate as "\LemmaAveragePromotion*"
\end{restatable}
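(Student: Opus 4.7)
The plan is to apply the promotion scheme of Figure~\ref{figure_k_r_tree_structure_definition_promotion} and unroll the resulting recurrence using Lemma~\ref{lemma_geometric_sequence_general}. I first record the top-level frequencies in $T_r$: the in-trunk actual leaf $L$ lies at depth $k$, and the $k$ hanging $T_{r-1}$-subtrees, denoted $T^{(1)},\ldots,T^{(k)}$, are rooted at depths $1,2,\ldots,k-1,k$, with $T^{(k)}$ being the right-sibling of $L$. By Lemma~\ref{lemma_frequency}, in the general mixed-stable case $f_L \ge 1/3^k$, attained in the worst case when every trunk ancestor of $L$ is weakly-stable and $L$ sits on the low-frequency side; for a strongly-stable sequence this sharpens to $f_L = 1/2^k$, with also $f_{T^{(i)}} = 1/2^i$ for $i<k$ and $f_{T^{(k)}} = 1/2^k$. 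In either case $\sum_{i=1}^{k} f_{T^{(i)}} = 1-f_L$.

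I then analyse the scheme step by step: (1) promote $L$ from depth $k$ to the root, yielding top-level gain $k\cdot f_L$; (2) shift the sibling subtree $T^{(k)}$ up by one level, gain $f_{T^{(k)}}$; (3) pull the leftmost node of every other hanging right-subtree into the rearranged trunk, demoting unqueried trunk nodes in exchange, for an extra non-negative gain $\Delta_r \ge 0$; (4) recurse inside every subtree, contributing $\sum_{i=1}^{k} f_{T^{(i)}}\cdot p_{r-1} = (1-f_L)\,p_{r-1}$ via the inductive lower bound on the subtree promotions. Combining gives
\begin{equation*}
  p_r \;\ge\; k\,f_L + f_{T^{(k)}} + \Delta_r + (1-f_L)\,p_{r-1}.
\end{equation*}

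For the general mixed-stable case I discard $f_{T^{(k)}}+\Delta_r$ and analyse $k\,f_L + (1-f_L)\,p_{r-1}$: if $p_{r-1}<k$ (maintained inductively) this expression is increasing in $f_L$, minimised at $f_L=1/3^k = 1-\alpha$, yielding $p_r \ge k(1-\alpha)+\alpha\,p_{r-1}$; Lemma~\ref{lemma_geometric_sequence_general} (with $\gamma=0$, $\beta=k(1-\alpha)$, $p_0=0$) then unrolls to $p_r \ge k(1-\alpha^r)$, and the dropped $f_{T^{(k)}}\ge 1/3^k>0$ makes the inequality strict. For the strongly-stable case, plugging $f_L = f_{T^{(k)}} = 1/2^k$ transforms the leading constant into $(k+1)(1-\alpha)$; writing $\delta_r := p_r - (k+1)(1-\alpha^r)$ reduces the recurrence to $\delta_r = \alpha\,\delta_{r-1} + \Delta_r$ with $\delta_0 = 0$, from which $\delta_r\ge 0$ is immediate.

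The main obstacle is the strict upper bound $\delta_r<\alpha^r$ in the strongly-stable statement. The plan there is an induction that carefully tracks the leftmost-node extraction: the leftmost leaf of a $T_{r-1}$-subtree sits at within-subtree depth $r-1$ with strongly-stable frequency $1/2^{r-1}$, so its contribution to $\Delta_r$ decays geometrically, and one must verify that $\Delta_r$ stays strictly below $\alpha^r - \alpha\,\delta_{r-1}$ throughout. The delicate bookkeeping is that removing one leaf from each non-sibling right-subtree produces a shape that is no longer exactly a $(k,r-1)$-tree, so one has to confirm the demoted trunk nodes land only at unqueried positions and that the recursive promotion inside each modified subtree is still at least the inductive bound.
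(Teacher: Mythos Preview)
Your overall scheme matches the paper's: same promotion pattern, same recursive accounting, same use of Lemma~\ref{lemma_geometric_sequence_general}. Two points need fixing.

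\textbf{The monotonicity step in the general case.} You write that $k f_L + (1-f_L)p_{r-1}$ is increasing in $f_L$ ``if $p_{r-1}<k$ (maintained inductively)''. But $p_{r-1}<k$ is \emph{not} maintained inductively: in the strongly-stable sub-case $p_{r-1}\to k+1$, so the inequality can fail. The fix is to substitute the inductive lower bound first: from $p_{r-1}\ge k(1-\alpha^{r-1})$ one gets $p_r \ge k f_L + (1-f_L)\,k(1-\alpha^{r-1})$, and since $k(1-\alpha^{r-1})<k$ this expression is increasing in $f_L$ and minimised at $f_L=1/3^k$, giving $p_r\ge k(1-\alpha^r)$. (The paper sidesteps the recurrence entirely here: it just observes that every non-$T_0$ leaf is promoted by $k$, and the total frequency of $T_0$ leaves is at most $(1-1/3^k)^r$.)

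\textbf{The upper bound $\delta_r<\alpha^r$.} Your plan is to show inductively that $\Delta_r < \alpha^r - \alpha\,\delta_{r-1}$, but under the sole hypothesis $\delta_{r-1}<\alpha^{r-1}$ the right-hand side can be arbitrarily close to $0$, so no positive bound on $\Delta_r$ suffices. The induction does not close without a sharper hypothesis, which amounts to computing $\Delta_r$ explicitly anyway. The paper does exactly this: the leftmost leaf of each right-hanging $T_{r-1}$ subtree is promoted by $r$ with global frequency $2^{-r}\cdot 2^{-i}$ (subtree at depth $i\in\{2,\ldots,k-1\}$), giving $\Delta_r = \gamma\cdot r/2^r$ with $\gamma=\tfrac{1}{2}(1-2^{-(k-2)})$. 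Then Lemma~\ref{lemma_geometric_sequence_general} with this nonzero $\gamma$ yields a closed form for $\delta$, and one checks $0\le\delta<\alpha^r$ directly from the coefficients. You have all the ingredients for this computation (you note the leftmost leaf sits at within-subtree depth $r-1$ with frequency $1/2^{r-1}$), you just need to carry it through rather than leave it as ``one must verify''.

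Finally, your worry that ``removing one leaf\ldots produces a shape that is no longer exactly a $(k,r-1)$-tree'' is a non-issue. The recursive promotion is applied to the \emph{unmodified} $(k,r-1)$-subtrees; the leftmost-node extraction is a separate step, and the order of these two steps is immaterial because the recursive promotions do not change the depth of the subtree's leftmost leaf. The demoted trunk nodes are inner nodes and are never queried, so their new positions do not affect cost.
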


%Due to space constraints, the proof of Lemma~\ref{lemma_average_promotion} is deferred to Appendix~\ref{appendix_section_missing_proofs}.
 
\begin{proof}
We can promote by $k$ every explicit leaf in every $T_{r'}$ for all recursive levels $1 \le r' \le r$, from its location to the root of $T_{r'}$. Only nodes that are $T_0$ leaves do not contribute an explicit promotion of at least $k$, therefore $p_r > k \cdot (1-f)$ where $f$ is the sum of query-frequencies of all $T_0$ leaves (the inequality is strict due to unaccounted subtree promotions). To conclude, we argue that $f \le (1 - \frac{1}{3^k})^r$. The frequency of accessing the explicit leaf of $T_r$ is at least $\frac{1}{3^k}$ by Lemma~\ref{lemma_frequency}, hence with frequency of at most $1 - \frac{1}{3^k}$ we query a value in some $T_{r-1}$ subtree. Similarly, within the chosen subtree there is again a relative frequency of at most $1 - \frac{1}{3^k}$ to query within some $T_{r-2}$ subtree. Overall, since there are $r$ levels of recursion, we conclude that $f \le (1 - \frac{1}{3^k})^r$.

Proving the second part of the claim required a more careful analysis. We define the following method of promotion, depicted in Figure~\ref{figure_k_r_tree_structure_definition}. In the $(k,r)$-tree we promote the (only) explicit leaf to the root, and promote its sibling subtree by $1$. 
Then we apply similar promotions recursively within every $(k,r-1)$-subtree. Finally, we promote the left-most node within each $(k,r-1)$-subtree that hangs as a right-subtree from the trunk to the parent of this subtree. Denote the total average (weighted) promotion by $p_r$.
Note that it does not matter if we promote the left-most nodes of the right subtrees before or after the recursive promotions, because the total order on the items guarantees that there is only one value that can be put instead of every demoted trunk node, and the recursive promotions within a specific subtree do not change the depth of its leftmost leaf.

The promotion of the explicit leaf of $T_r$ saves a cost of $k$ weighted by a factor (query frequency) of $\frac{1}{2^k}$. The promotion of the sibling subtree saves $1$ weighted by a factor of $\frac{1}{2^k}$. The recursive promotions are $p_{r-1}$ weighted by $\sum_{i=1}^{k}{\frac{1}{2^i}}$ (for all the $k$ subtrees), and finally the last promotions are technically negligible (as seen in the analysis below), but for the sake of completeness we consider them in the analysis as well: promoting the left-most node from each subtree saves $(r-1)+1 = r$ since the leaf that we promote last is at depth $r-1$ within the recursive subtree, and this promotion is weighted by $\frac{1}{2^r} \cdot \sum_{i=2}^{k-1}{\frac{1}{2^i}}$ (factor of $\frac{1}{2^r}$ follows from Lemma~\ref{lemma_deepest_node}). We get that:
$p_r = \frac{k+1}{2^k} + p_{r-1} \cdot \big (1-\frac{1}{2^k} \big ) + \frac{r}{2^r} \cdot \frac{1}{2} \big ( 1 - \frac{1}{2^{k-2}} \big )$.
%%% [I had to make the above inline-equation to save sapce.]
Then by Lemma~\ref{lemma_geometric_sequence_general}, with $\alpha = 1 - \frac{1}{2^k}$ and $\gamma = \frac{1}{2}(1 - \frac{1}{2^{k-2}})$, we get:
$$ p_r = (k+1) \cdot (1-\alpha^r) + \delta \ \ \ , \ \ \ \delta \equiv 
\alpha^r \cdot p_0 +
\frac{2\alpha \gamma}{{(2\alpha -1)}^2} \cdot \Big (\alpha^r  - \frac{1}{2^r}\Big ) - \frac{\gamma}{(2\alpha-1)} \cdot \frac{r}{2^r} $$

It remains to show that $0 \le \delta < \alpha^r$. It is simple to see that $\delta = 0$ for $k=2$, because then $\gamma = 0$ and $p_0 = 0$ is the average weighted promotion in a tree with a single node. For $k \ge 3$, by the definition of $\alpha$ and $\gamma$ we have that
$\frac{2\alpha \gamma}{{(2\alpha -1)}^2} = \frac{(2^k-1)(2^k-4)}{(2^k-2)^2} = 1 - \frac{1}{2^k - 4 + \frac{4}{2^{k}}} \in (\frac{3}{4},1)$ and $\frac{\gamma}{2\alpha - 1} = \frac{1}{2} - \frac{1}{2^k - 2} \in [\frac{1}{3},\frac{1}{2})$. Substituting these bounds 
and $p_0 = 0$ into the formula for $\delta$
 gives $\delta < \alpha^r$. Moreover, $\delta$ is positive since
 $\delta >
 \frac{3}{4} (\alpha^r - \frac{1}{2^r}) - \frac{1}{2} \cdot \frac{r}{2^r} =
 \frac{3}{4} (\alpha - \frac{1}{2}) \cdot \sum_{i=0}^{r-1}{\alpha^i \cdot \big( \frac{1}{2} \big )^{r-1-i}} - \frac{r}{2^{r+1}} =
 \frac{3 (\alpha - \frac{1}{2})}{2^{r+1}} \cdot \sum_{i=0}^{r-1}{(2\alpha)^i} - \frac{r}{2^{r+1}} >
\frac{3(\alpha - \frac{1}{2})}{2^{r+1}} \cdot r - \frac{r}{2^{r+1}}
=
(3 (\frac{1}{2} - \frac{1}{2^k}) - 1)\cdot \frac{r}{2^{r+1}} > 0$ for $k \ge 3$.

Note that indeed the gain from promoting the left-most node of each subtree is negligible, since the effect is merely having $\gamma \ne 0$, which only contributes $0 \le \delta < \alpha^r < 1$.
\end{proof}

\begin{corollary}
\label{corollary_delta_cost}
The average cost-per-query of $GF$ on a strongly-stable sequence induced by a $(k,r)$-tree is larger than the optimal cost by at least $(k+1) \cdot \big (1 - \big(1 - \frac{1}{2^k} \big)^r \big )$. On any mixed-stable sequence, the difference is at least $k \cdot \big (1 - \big(1 - \frac{1}{3^k} \big)^r \big )$.
\end{corollary}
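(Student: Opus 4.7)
The plan is to observe that this corollary is essentially an immediate consequence of the two ingredients already in place: Lemma~\ref{lemma_fixed_structure} pins down the behavior of $GF$ on a mixed-stable sequence, and Lemma~\ref{lemma_average_promotion} quantifies the cost savings available to $OPT$ through a static restructuring.

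First, I would invoke Lemma~\ref{lemma_fixed_structure} to conclude that when $GF$ serves a mixed-stable sequence $X$ with the associated $(k,r)$-tree $T_r$ as its initial tree, it never restructures $T_r$. Hence $\hat{c}(GF,X)$ equals the weighted average $\sum_{u \text{ leaf}} f(u)\,(d_{T_r}(u)+1)$, where $f(u)$ is the query frequency of leaf $u$. Next, to upper bound $\hat{c}(OPT,X)$, I would let $OPT$ begin by paying a one-shot cost of at most $|T_r| - 1 = O(n)$ rotations to transform $T_r$ into the statically-optimized tree $T_r'$ constructed inside the proof of Lemma~\ref{lemma_average_promotion}. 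After this initial restructuring, $OPT$ serves the remainder of $X$ without changing $T_r'$.

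The key observation is that the promotion scheme of Lemma~\ref{lemma_average_promotion} only demotes internal trunk nodes, which are never queried, and hence $d_{T_r'}(u) \le d_{T_r}(u)$ for every queried leaf $u$. By Definition~\ref{definition_promotion}, the sum $\sum_u f(u)\,(d_{T_r}(u) - d_{T_r'}(u))$ is exactly the average weighted promotion $p_r$, so running $OPT$ as above yields $\hat{c}(OPT,X) \le \sum_u f(u)(d_{T_r}(u)+1) - p_r + O(n/m)$. Subtracting from the expression for $\hat{c}(GF,X)$ above gives
\[
\hat{c}(GF,X) - \hat{c}(OPT,X) \;\ge\; p_r - O(n/m),
\]
which tends to $p_r$ once the atomic stable sequence is repeated sufficiently many times. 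Substituting $p_r > k \cdot (1 - (1 - 3^{-k})^r)$ from the general mixed-stable bound of Lemma~\ref{lemma_average_promotion}, and $p_r = (k+1)(1-(1-2^{-k})^r) + \delta \ge (k+1)(1-(1-2^{-k})^r)$ (using $\delta \ge 0$) in the strongly-stable case, yields the two claimed inequalities.

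There is really no hard step here beyond the bookkeeping of the preceding lemmas: the only point that deserves verification is that the promotion scheme preserves or decreases the depth of every queried node, which is already guaranteed by construction since only unqueried trunk nodes are demoted. The absorption of the $O(n/m)$ additive slack into the strict inequality in the mixed-stable case (and into the ``at least'' in the strongly-stable case) is justified by taking the atomic sequence repeated long enough, as done throughout the paper.
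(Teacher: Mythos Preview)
Your proposal is correct and takes essentially the same approach as the paper, which treats the corollary as an immediate consequence of Lemma~\ref{lemma_average_promotion} together with the earlier observation that ``static optimization of a tree $T$ to $T'$ for a mixed-stable sequence $X$ implies a cost improvement for $OPT$ which is at least the average promotion of $T$ to $T'$, per query.'' You are simply more explicit than the paper about the $O(n/m)$ restructuring overhead and its absorption via repetition, which the paper silently ignores.
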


We are ready to prove Theorem~\ref{theorem_additive_loglogn}.

\theoremAdditiveLoglogn*

\begin{proof}
Let $X$ be the strongly-stable sequence induced by a $(k,r)$-tree $T_r$, and for simplicity assume that the initial tree is $T_r$.\footnote{We remove this assumption in Remark~\ref{remark_amplification_additive_ok}.} By Lemma~\ref{lemma_size_of_tree},
$n = (2 + \frac{2}{k-1}) k^r - (1 + \frac{2}{k-1})$ therefore $\lg \lg n = \lg r + \lg \lg k + O(1)$.\footnote{$k \ge 2 \Rightarrow k^r \le n < 4k^r \Rightarrow \lg n = r \lg k + c$ for $c \in [0,2)$, and so $\lg \lg n = \lg r + \lg \lg k + O(1)$.}
By Corollary~\ref{corollary_delta_cost},
$\hat{c}(GF,X) - \hat{c}(OPT,X) \ge \Delta \equiv (k+1) \cdot (1 - (1 - \frac{1}{2^k})^r)$. By choosing $r = 2^k$ we get that $\Delta = (k+1) \cdot (1 - (1 - \frac{1}{2^k})^{2^k}) \approx (1-\frac{1}{e}) \cdot (k+1)$.\footnote{The approximation is off by less than $10\%$ for $k \ge 2$. ($60\%$ and $20\%$ for $k=0,1$ respectively.)} We also get that $\lg \lg n = k + \lg \lg k + O(1)$, therefore $\Delta \approx (1-\frac{1}{e}) \lg \lg n$ and we conclude that 
$\hat{c}(GF,X) - \hat{c}(OPT,X) \ge \Omega(\lg \lg n)$. 

%As for the length of the sequence, 
By Lemma~\ref{lemma_atomic_length} the length of the atomic strongly-stable sequence of $T_r$ is $m = 2^{d(T_r)}$, hence $m = 2^{rk}$ by Lemma~\ref{lemma_deepest_node}. By Lemma~\ref{lemma_size_of_tree}, $\frac{n + (1+2/(k-1))}{2 + 2/(k-1)} = k^r = 2^{r \lg k}$. Together we get that $m = 2^{rk} = 2^{(r \lg k) \cdot (k/\lg k)} = \Big ( \frac{n + (1+2/(k-1))}{2 + 2/(k-1)} \Big )^{(k/\lg k)} = n^{\Theta(\frac{\lg \lg n}{\lg \lg \lg n})}$.
\end{proof}

\begin{remark}
\label{remark_additive_gap_can_be_mixed_stable}
In the proof of Theorem~\ref{theorem_additive_loglogn}, the sequence $X$ does not have to be strongly-stable, and any mixed-stable sequence $X$ induced by a $(k,r)$-tree $T_r$ works as well. Indeed, Corollary~\ref{corollary_delta_cost} guarantees that $\hat{c}(GF,X) - \hat{c}(OPT,X) \ge \Delta$ for $\Delta = k \cdot \big (1 - \big(1 - \frac{1}{3^k} \big)^r \big )$, and then by choosing $r = 3^k$ we  get that $\Delta = \Theta(k)$, and $k = \Theta(\lg \lg n)$, and $m =  2^{\Theta(rk)} = n^{\Theta(\frac{\lg \lg n}{\lg \lg \lg n})}$.
\end{remark}

\begin{remark}
\label{remark_optimizing_additive_gap}
The choice of $r = 2^k$ in the proof of Theorem~\ref{theorem_additive_loglogn} maximizes our lower bound on the additive gap $cost(GF,X) - cost(OPT,X)$ (up to constants) for our $(k,r)$-trees. Indeed, revisiting the proof, we have that $\Delta$ and $n$ are both functions of $k$ and $r$, and we need to choose $r$ and $k$ to maximize $\Delta$ as a function of $n$. Note that $\Delta = O(k)$ regardless of $r$, and $\lg \lg (n) = \lg r + \lg \lg k + O(1)$. To simplify and eliminate a parameter we define $r = 2^k \cdot f(k)$ for some monotone function $f$. Now we get  simplified relations: $\Delta = (k+1) \cdot (1 - (1 - \frac{1}{2^k})^{2^k f(k)}) \approx (k+1) \cdot (1 - e^{-f(k)})$ and $\lg \lg n = k + \lg f(k) + \lg \lg k + O(1)$. Consider the following two cases.
\begin{itemize}
    \item If $f(k) = \Omega(1)$: then $\exists c \in \mathbb{R}$ such that $\forall k \ge 1: \lg f(k) \ge c$, and therefore $\lg \lg n = \Omega(k)$, written differently $k = O(\lg \lg n)$, which yields $\Delta = O(k) = O(\lg \lg n)$.
    \item If $f(k) = o(1)$: Being $o(1)$ means that $\lim_{k \to \infty}{f(k)} = 0$, so for sufficiently large values of $k$ we can use the  approximation $e^x \approx 1+x$ (that holds for small $x$) to get: $\Delta \approx (k+1) \cdot f(k) = \frac{k+1}{1/f(k)}$. If $\frac{1}{f(k)}$ grows faster than $(k+1)$, we get $\Delta = O(1)$ which does not even grow with $n$. Therefore $\frac{1}{f(k)}$ is increasing, but at a sub-linear rate. Recall that $\lg \lg n = k - \lg \frac{1}{f(k)} + \lg k + O(1)$. Since $\frac{1}{f(k)}$ is sub-linear, we get that 
    $k=\Theta(\lg \lg n)$, which yields  $\Delta = O(k) = O(\lg \lg n)$.
\end{itemize}
\end{remark}

%\newtext{In the proof of Theorem~\ref{theorem_additive_loglogn} we assumed that the initial tree corresponds to the stable sequence that we analyzed. Remark~\ref{remark_amplification_additive} in Appendix~\ref{section_appendix_initial_tree} explains why we can remove this assumption.}

\begin{corollary}
\label{corollary_additive_lglgn}
$GF$ is not $(1,O(m))$-competitive. If the multiplicative term is $1$, then the additive term is at least $\Omega(m \cdot \lg \lg n)$.
\end{corollary}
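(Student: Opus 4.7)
The plan is to derive this corollary directly from Theorem~\ref{theorem_additive_loglogn}, which supplies the only nontrivial ingredient; the rest is bookkeeping.

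For the second sentence of the corollary, I would assume that $GF$ is $(1,\beta)$-competitive for some additive term $\beta$ (viewed, as is natural, as a function of $m$ and $n$). By Definition~\ref{definition_competitive_ration} this means $cost(GF,X) - cost(OPT,X) \le \beta$ for every sequence $X$ of length $m$ over $[n]$. Theorem~\ref{theorem_additive_loglogn} provides, for every $n \ge 2$, a concrete sequence $X \in [n]^m$ for which $cost(GF,X) - cost(OPT,X) = \Omega(m \cdot \lg \lg n)$. Combining these two relations instance-by-instance forces $\beta = \Omega(m \cdot \lg \lg n)$, which is precisely the second assertion.

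For the first sentence, I would argue by contradiction. Suppose $GF$ were $(1, c\cdot m)$-competitive for some absolute constant $c > 0$. Pick $n$ large enough that the hidden constant in the $\Omega(m \cdot \lg \lg n)$ bound of Theorem~\ref{theorem_additive_loglogn} multiplied by $\lg \lg n$ exceeds $c$; this is possible because $\lg \lg n \to \infty$ with $n$. Applying the theorem for this $n$ yields a sequence $X$ with $cost(GF,X) - cost(OPT,X) > c \cdot m$, contradicting $(1, c\cdot m)$-competitiveness. Hence no constant $c$ suffices, so $GF$ is not $(1, O(m))$-competitive.

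I do not expect any technical obstacle: the corollary is essentially a contrapositive restatement of Theorem~\ref{theorem_additive_loglogn}. The only minor care needed is to allow $\beta$ to depend on $n$ (not merely on $m$), and to recognize that the family of sequences from the theorem ranges over all $n \ge 2$, so $\lg \lg n$ can be driven arbitrarily large to overwhelm any fixed constant multiple of $m$.
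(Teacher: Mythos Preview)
Your proposal is correct and matches the paper's approach: the paper states Corollary~\ref{corollary_additive_lglgn} without proof, as an immediate consequence of Theorem~\ref{theorem_additive_loglogn}, and your argument spells out precisely that implication.
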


We have yet to analyze the cost of $OPT$ on a strongly-stable sequence $X$ corresponding to a $(k,r)$-tree that produces the gap of $\Omega(m \cdot \lg \lg n)$ in Theorem~\ref{theorem_additive_loglogn}. Allegedly, if the cost is cheap, say linear, we would get a large competitive ratio as well. However, by Theorem~\ref{theorem_2_competitive_on_stable_sequences} we expect a competitive ratio of at most $2$, and therefore we can conclude without further analysis, that $cost(OPT,X) = \Omega(m \cdot \lg \lg n)$. 
In fact, we prove that $cost(OPT,X) = \Theta(m \cdot \frac{\lg n}{\lg \lg \lg n})$.
It follows that the competitive ratio deteriorates when the additive gap increases. %We prove the following lemmas in Appendix~\ref{appendix_section_missing_proofs}.

\begin{restatable}{lemma}{LemmaCostGFonStableTree}
\label{lemma_cost_greedy_future_on_stable_tree}
Define the constants $\alpha \equiv 1 - \frac{1}{2^k}$ and $\beta \equiv \sum_{j=1}^{k}{\frac{j}{2^j}} + \frac{k+1}{2^k}$. Let $X$ be a strongly-stable sequence induced by  a $(k,r)$-tree. Then $\hat{c}(GF,X) = 2^k \cdot \beta \cdot (1 - \alpha^r) + \alpha^r$. In asymptotic terms: $\hat{c}(GF,X) = \Theta(2^k \cdot (1 - \alpha^r))$.
%%% Restate as "\LemmaCostGFonStableTree*"
\end{restatable}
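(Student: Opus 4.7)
The plan is to derive a simple linear recurrence for $c_r \equiv \hat{c}(GF,X)$ on the $(k,r)$-tree and then solve it via Lemma~\ref{lemma_geometric_sequence_general}. Since $X$ is strongly-stable and $GF$ keeps the initial tree fixed by Lemma~\ref{lemma_fixed_structure}, Corollary~\ref{corllary_cost_of_GF_formula} gives
\[
c_r \;=\; \sum_{x \in \mathrm{leaves}(T_r)} \frac{d(x)+1}{2^{d(x)}},
\]
and the base case is immediately $c_0 = 1$ (a singleton at depth $0$ contributing $1$).

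To obtain the recurrence, I would partition the leaves of $T_r$ into the unique actual leaf at depth $k$ (contributing $(k+1)/2^k$) and the leaves of the $k$ hanging $T_{r-1}$ subtrees. A quick inspection of Definition~\ref{definition_tree_pattern} shows that these subtrees sit at depths $1,2,\ldots,k$, exactly one at each depth: the left child of the root gives depth $1$, and the right children off the left-chain give depths $2,\ldots,k$. For a subtree rooted at depth $d$, writing $d'(x)$ for the relative depth inside the subtree, its contribution is
\[
\sum_{x} \frac{d + d'(x) + 1}{2^{d + d'(x)}} \;=\; \frac{c_{r-1}}{2^d} \,+\, \frac{d}{2^d},
\]
where I use the identity $\sum_{x \in \mathrm{leaves}(S)} 1/2^{d'(x)} = 1$ that holds for every strongly-stable subtree $S$ (follows from Lemma~\ref{lemma_frequency} by a one-line induction on the size of $S$). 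Summing the trunk's actual leaf and all $k$ subtree contributions yields
\[
c_r \;=\; \frac{k+1}{2^k} \,+\, c_{r-1}\sum_{d=1}^{k}\frac{1}{2^d} \,+\, \sum_{d=1}^{k}\frac{d}{2^d} \;=\; \alpha\, c_{r-1} + \beta,
\]
which matches exactly the constants $\alpha$ and $\beta$ defined in the lemma.

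Applying Lemma~\ref{lemma_geometric_sequence_general} with $\gamma=0$ and $b_0 = 1$ gives $c_r = \tfrac{\beta}{1-\alpha}(1-\alpha^r) + \alpha^r$, and substituting $1-\alpha = 1/2^k$ produces the stated closed form $2^k \beta (1-\alpha^r) + \alpha^r$. For the asymptotic claim, I would observe that $\beta = \Theta(1)$: the tail series $\sum_{j\ge1} j/2^j = 2$ together with $(k+1)/2^k \to 0$ bounds $\beta$ from above by a constant, while $\beta \ge 1/2$ is trivial. For $r \ge 1$ the elementary inequality $1-(1-x)^r \ge x$ at $x = 1/2^k$ yields $2^k(1-\alpha^r) \ge 1 \ge \alpha^r$, so the first term dominates and $c_r = \Theta(2^k(1-\alpha^r))$. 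I do not expect a real obstacle beyond careful bookkeeping of the subtree depths and the frequency-normalization identity; after that the algebra and application of Lemma~\ref{lemma_geometric_sequence_general} are entirely routine.
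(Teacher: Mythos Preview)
Your proposal is correct and follows essentially the same approach as the paper: set up the recurrence $c_r = \alpha c_{r-1} + \beta$ by decomposing the $(k,r)$-tree into the actual leaf at depth $k$ and the $k$ copies of $T_{r-1}$ at depths $1,\ldots,k$, then solve via Lemma~\ref{lemma_geometric_sequence_general} and bound the second term using $2^k(1-\alpha^r) \ge 1$. The only minor difference is that the paper computes $\beta$ exactly as $2 - 1/2^k$ (via telescoping $\beta(k+1)-\beta(k)=1/2^{k+1}$) rather than merely bounding it, but your $\Theta(1)$ bound suffices for the asymptotic claim.
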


\begin{proof}
We write a recurrence for the average cost, $c_r$, of $GF$  on the strongly-stable sequence induced by $T_r$.  We have $c_0 = 1$, and
$$c_{r+1}
=
\frac{1+k}{2^k} + \sum_{j=1}^{k}{\frac{j+c_r}{2^j}}
=
\Big (1 - \frac{1}{2^k} \Big ) c_r + \sum_{j=1}^{k}{\frac{j}{2^j}} + \frac{1+k}{2^k} \equiv \alpha \cdot c_r + \beta
$$
($\frac{1+k}{2^k}$ is due to the actual leaf, and the summation is the contribution of all the $T_r$ subtrees.) By Lemma~\ref{lemma_geometric_sequence_general} (with $\gamma=0$), $c_r = \frac{\beta}{1-\alpha} (1 - \alpha^r) + \alpha^r \cdot c_0 = 2^k \cdot \beta (1 - \alpha^r) + \alpha^r$. Since $\alpha = 1 - \frac{1}{2^k} \in [\frac{3}{4},1)$ clearly $\alpha^r < 1$.
Furthermore, $\beta = \Theta(1)$.
To see this note that $\beta$ only depends on $k$. Denote $\beta = \beta(k)$ and observe that: $\beta(k+1) - \beta(k) = \big (\frac{k+1}{2^{k+1}} + \frac{k+2}{2^{k+1}} \big ) - \frac{k+1}{2^k} = \frac{1}{2^{k+1}}$.
Therefore, $\beta(k) = \beta(2) + \sum_{i=3}^{k}{\Big( \beta(i) - \beta(i-1) \Big)} = \Big( \frac{1}{2} + \frac{2}{4} + \frac{3}{4} \Big) + \sum_{i=3}^{k}{\frac{1}{2^{i}}} = 2 - \frac{1}{2^k}$, and $\beta(k) \in [\frac{7}{4},2) \Rightarrow \beta = \Theta(1)$. Because $2^k \cdot (1 - \alpha^r) \ge 2^k \cdot (1 - \alpha) = 1 > \alpha^r$, we conclude that $c_{r} = \Theta(2^k \cdot (1 - \alpha^r))$.
\end{proof}

\begin{restatable}{lemma}{LemmaOPTLowerBound}
\label{lemma_opt_lower_bound}
Let $X$ be a sequence from the family of sequences in Theorem~\ref{theorem_additive_loglogn}, then $cost(OPT,X) = \Theta(m \cdot \frac{\lg n}{\lg \lg \lg n})$.
%%% Restate as "\LemmaOPTLowerBound*"
\end{restatable}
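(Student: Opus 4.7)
The plan is to sandwich $cost(OPT,X)$ between an easy upper bound obtained from $GF$ itself and a lower bound obtained from the fact that $GF$ is at most $2$-competitive on strongly-stable sequences (Theorem~\ref{theorem_2_competitive_on_stable_sequences}), and then to translate the resulting $\Theta(m \cdot 2^k)$ estimate into $\Theta(m \cdot \lg n / \lg \lg \lg n)$ using the parameter relationships established in the proof of Theorem~\ref{theorem_additive_loglogn}.

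First, I would recall the setup: the specific family in Theorem~\ref{theorem_additive_loglogn} consists of strongly-stable sequences induced by $(k,r)$-trees with the choice $r = 2^k$. Plugging this $r$ into Lemma~\ref{lemma_cost_greedy_future_on_stable_tree} yields $\hat{c}(GF,X) = 2^k \beta (1-\alpha^r) + \alpha^r$ with $\alpha = 1 - 1/2^k$, so that $\alpha^r = (1 - 1/2^k)^{2^k}$ lies strictly between two absolute constants (approaching $1/e$). Consequently $1 - \alpha^r = \Theta(1)$ and $\beta = \Theta(1)$, giving $\hat{c}(GF,X) = \Theta(2^k)$, i.e.\ $cost(GF,X) = \Theta(m \cdot 2^k)$.

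For the upper bound on $OPT$, I would use the trivial inequality $cost(OPT,X) \le cost(GF,X) = \Theta(m \cdot 2^k)$. For the matching lower bound, I would invoke Theorem~\ref{theorem_2_competitive_on_stable_sequences}, which states that on a strongly-stable sequence $cost(GF,X) \le 2 \cdot cost(OPT,X)$. Rearranging gives $cost(OPT,X) \ge \tfrac{1}{2} cost(GF,X) = \Omega(m \cdot 2^k)$. Combining the two bounds yields $cost(OPT,X) = \Theta(m \cdot 2^k)$.

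The remaining step is a purely arithmetic translation of $2^k$ into the claimed form. From Lemma~\ref{lemma_size_of_tree} with $r = 2^k$ we have $n = \Theta(k^{2^k})$, so $\lg n = \Theta(2^k \lg k)$, $\lg \lg n = k + \Theta(\lg k) = \Theta(k)$, and $\lg \lg \lg n = \Theta(\lg k)$. Hence $\lg n / \lg \lg \lg n = \Theta(2^k \lg k / \lg k) = \Theta(2^k)$, completing the proof. There is no real obstacle here: the only subtlety is verifying that $1 - \alpha^r$ is bounded below by a positive constant for the specific coupling $r = 2^k$, which follows from the standard estimate $(1-1/N)^N \in (1/4, 1/e)$ for $N \ge 2$, and then carefully tracking the logarithms to confirm that $2^k$ matches $\lg n / \lg \lg \lg n$ up to constants.
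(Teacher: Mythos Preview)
Your proposal is correct and follows essentially the same approach as the paper: sandwich $\hat{c}(OPT,X)$ between $\frac{1}{2}\hat{c}(GF,X)$ and $\hat{c}(GF,X)$ via Theorem~\ref{theorem_2_competitive_on_stable_sequences}, apply Lemma~\ref{lemma_cost_greedy_future_on_stable_tree} with $r=2^k$ to get $\Theta(2^k)$, and then convert $2^k$ to $\Theta(\lg n/\lg\lg\lg n)$ using the size formula from Lemma~\ref{lemma_size_of_tree}. The only cosmetic difference is that the paper writes the final arithmetic as $2^k = r = \frac{\lg n - O(1)}{\lg k} = \Theta(\lg n/\lg\lg\lg n)$ directly, whereas you expand the intermediate logarithms; both are equivalent.
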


\begin{proof}
Let $X$ be a strongly-stable sequence induced by querying a $(k,r)$-tree. We know that $\frac{1}{2} \hat{c}(GF,X) < \hat{c}(OPT,X) \le \hat{c}(GF,X)$ where the lower-bound is by Theorem~\ref{theorem_2_competitive_on_stable_sequences}. Therefore, $\hat{c}(OPT,X) = \Theta(2^k \cdot (1 - \alpha^r))$ by  Lemma~\ref{lemma_cost_greedy_future_on_stable_tree}. By Lemma~\ref{lemma_size_of_tree}, $\lg n = r \cdot \lg k + O(1)$, or $r = \frac{\lg n - O(1)}{\lg k}$. When we substitute $r = 2^k$ as in the proof of Theorem~\ref{theorem_additive_loglogn}, we get that $(1 - \alpha^r) = \Theta(1)$ and $2^k = r = \frac{\lg n - O(1)}{\lg k} = \Theta(\frac{\lg n}{\lg \lg \lg n})$. Therefore, $cost(OPT,X) = \Theta(m \cdot \frac{\lg n}{\lg \lg \lg n})$.
\end{proof}

As a concluding remark, we recall that the $F_r$-tree  is a  $(k,r)$-tree for $k=2$. 
If we substitute $k=2$ in the formula of Lemma~\ref{lemma_cost_greedy_future_on_stable_tree} we get that $\alpha = \frac{3}{4}$, $\beta = \frac{7}{4}$, and  $\hat{c}(GF,X) = 7 \cdot (1 - (3/4)^r) + (3/4)^r$. By Lemma~\ref{lemma_average_promotion}, the average promotion is $3 \cdot (1 - (3/4)^r)$ (for $k=2$, we have $\delta=0$). These values are the strongly-stable analogues of Lemma~\ref{lemma_fibo_promotion_weakly_stable}, and can be used to show a weaker lower bound of $\frac{7}{4}$, on the competitive ratio of $GF$.

%%% %%% %%% \input{section_conclusions.tex}
\section{Conclusions and Open Questions}
\label{section_conclusions}

In this paper we gave improved lower bounds on the competitiveness of the Greedy Future ($GF$) algorithm for serving a sequence of queries by a dynamic binary search tree (BST). In contrast to many of the previous results on $GF$ that are obtained using the geometric-view by studying the equivalent Geometric Greedy ($GG$) algorithm, we used the 
standard ``tree-view'' and the treap-based definition of $GF$. We showed that the competitive ratio of $GF$ is at least $2$, and that there are sequences $X \in [n]^m$ for which the cost difference (additive gap) between $GF$ and $OPT$ is  $\Omega(m \cdot \lg \lg n)$. These lower bounds enabled us  to show that if $GF$ is approximately-monotone (Definition~\ref{definition_approximate_monotone}) with some constant $c$ then $c\ge 2$. Also, the lower bounds show that the cost of $GF$ on a sequence compared to its cost on its reverse,  may differ by a factor as close as we like to $2$. In contrast,
the cost of $OPT$ on a sequence compared to its reverse may differ by at most $n$.

Our results give  new insights on the ``tradeoff'' between the additive term and the multiplicative term in the competitiveness of $GF$, showing that the multiplicative term is typically larger when the total cost of the algorithm on the sequence is smaller. Indeed, our best multiplicative term is achieved for a sequence whose average cost per query is $6$. This tradeoff is not surprising since a fixed difference implies a larger ratio when the quantities are small. It may be interesting to figure out if this tradeoff hints of some underlying property of $GF$, or is just an artifact of our technique that requires high costs on average per query in order to increase the additive gap between $GF$ and $OPT$.

Clearly, these improved lower bounds still don't settle the deeper question of whether $GF$ (and $GG$) is dynamically-optimal. Our techniques focused on a smaller family of sequences which we named \emph{mixed-stable sequences}, whereas ``most'' sequences are not stable.
%Given that we proved a lower bound of $2$, the next non-trivial step would be to improve it (or show it to be the competitive ratio).
While it is possible that an improved lower bound (larger than $2$) can be found by a more clever pattern of mixed-stable sequences, it seems more likely to be found by  analyzing sequences for which the tree maintained by $GF$ is not static. In addition, we note that $GF$ was not investigated too deeply directly, as most of the work has been done in the geometric view with respect to its counterpart $GG$. Therefore, studying other problems in tree-view may give complementing insights. One such problem is the deque conjecture, which has been partially settled for $GG$, in the case when deletions are only allowed on the minimum item~\cite{GeometryExtended-WADS2015}.

\bibliography{reference}

\begin{thebibliography}{10}

\bibitem{GeometricLgLgN_Chalermsook}
Parinya Chalermsook, Julia Chuzhoy, and Thatchaphol Saranurak.
\newblock {Pinning down the Strong Wilber 1 Bound for Binary Search Trees}.
\newblock In {\em Approximation, Randomization, and Combinatorial Optimization.
  Algorithms and Techniques (APPROX/RANDOM)}, pages 33:1--33:21, 2020.

\bibitem{GeometryExtended-WADS2015}
Parinya Chalermsook, Mayank Goswami, L{\'{a}}szl{\'{o}} Kozma, Kurt Mehlhorn,
  and Thatchaphol Saranurak.
\newblock Greedy is an almost optimal deque.
\newblock In {\em 14th International Conference on Algorithms and Data
  Structures (WADS)}, pages 152--165, 2015.

\bibitem{PatternAvoiding2015}
Parinya Chalermsook, Mayank Goswami, L{\'{a}}szl{\'{o}} Kozma, Kurt Mehlhorn,
  and Thatchaphol Saranurak.
\newblock Pattern-avoiding access in binary search trees.
\newblock In {\em IEEE 56th Annual Symposium on Foundations of Computer Science
  (FOCS)}, pages 410--423, 2015.

\bibitem{TheLandscapeOfBSTs2016}
Parinya Chalermsook, Mayank Goswami, L{\'{a}}szl{\'{o}} Kozma, Kurt Mehlhorn,
  and Thatchaphol Saranurak.
\newblock The landscape of bounds for binary search trees.
\newblock {\em arXiv}, abs/1603.04892, 2016.

\bibitem{ImprovedPatternAvoidanceBounds-SODA2023}
Parinya Chalermsook, Manoj Gupta, Wanchote Jiamjitrak, Nidia~Obscura Acosta,
  Akash Pareek, and Sorrachai Yingchareonthawornchai.
\newblock Improved pattern-avoidance bounds for greedy {BST}s via matrix
  decomposition.
\newblock In {\em ACM-SIAM Symposium on Discrete Algorithms (SODA)}, 2023.

\bibitem{BSTInversions2020}
Parinya Chalermsook and Wanchote~Po Jiamjitrak.
\newblock New binary search tree bounds via geometric inversions.
\newblock In {\em 28th Annual European Symposium on Algorithms (ESA)}, pages
  28:1--28:16, 2020.

\bibitem{TheGeometryOfBSTs-SODA2009}
Erik~D. Demaine, Dion Harmon, John Iacono, Daniel Kane, and Mihai
  P\u{a}tra\c{s}cu.
\newblock The geometry of binary search trees.
\newblock In {\em 20th Annual ACM-SIAM Symposium on Discrete Algorithms
  (SODA)}, page 496–505, 2009.

\bibitem{TangoTrees}
Erik~D. Demaine, Dion Harmon, John Iacono, and Mihai P\u{a}tra\c{s}cu.
\newblock Dynamic optimality—almost.
\newblock {\em SIAM Journal on Computing}, 37(1):240--251, 2007.

\bibitem{Fox2011GGAccessLemma}
Kyle Fox.
\newblock Upper bounds for maximally greedy binary search trees.
\newblock In {\em 12th International Conference on Algorithms and Data
  Structures (WADS)}, page 411–422, 2011.

\bibitem{Fredman1975LinearAlmostOptBST}
Michael~L. Fredman.
\newblock Two applications of a probabilistic search technique: Sorting {X}+{Y}
  and building balanced search trees.
\newblock In {\em 7th Annual ACM Symposium on Theory of Computing (STOC)}, page
  240–244, 1975.

\bibitem{ChainSplay}
George~F. Georgakopoulos.
\newblock Chain-splay trees, or, how to achieve and prove
  loglog{N}-competitiveness by splaying.
\newblock {\em Information Processing Letters}, 106(1):37–43, 2008.

\bibitem{DionHarmon2006Thesis}
Dion Harmon.
\newblock {\em New Bounds on Optimal Binary Search Trees}.
\newblock PhD thesis, Massachusetts Institute of Technology, 2006.

\bibitem{KnuthOptStatic1971}
Donald~E. Knuth.
\newblock Optimum binary search trees.
\newblock {\em Acta Informatica}, 1:14--25, 1989.

\bibitem{KozmaThesis}
László Kozma.
\newblock {\em Binary search trees, rectangles and patterns}.
\newblock PhD thesis, Saarland University, 2016.

\bibitem{LevyTarjan2019Thesis}
Caleb Levy and Robert Tarjan.
\newblock {\em New Paths from Splay to Dynamic Optimality}.
\newblock PhD thesis, Princeton, 2019.

\bibitem{LucasGF1988}
Joan~M. Lucas.
\newblock Canonical forms for competitive binary search tree algorithms.
\newblock In {\em Tech. Rep. DCS-TR-250}. Rutgers University, 1988.

\bibitem{StaticNearOptimalTreeMehlhorn}
Kurt Mehlhorn.
\newblock Nearly optimal binary search trees.
\newblock {\em Acta Informatica}, 5:287--295, 1975.

\bibitem{MunroGF2000}
J.~Ian Munro.
\newblock On the competitiveness of linear search.
\newblock In {\em 8th Annual European Symposium on Algorithms (ESA)}, page
  338–345. Springer-Verlag, 2000.

\bibitem{reddmann2021geometric}
Hauke Reddmann.
\newblock On the geometric equivalent of instance optimal binary search tree
  algorithms.
\newblock Master's thesis, Universit{\"a}t Hamburg, 2021.

\bibitem{Splay1985}
Daniel~Dominic Sleator and Robert~Endre Tarjan.
\newblock Self-adjusting binary search trees.
\newblock {\em Journal of ACM}, 32(3):652–686, 1985.

\bibitem{MultiSplay}
Chengwen~Chris Wang, Jonathan Derryberry, and Daniel~Dominic Sleator.
\newblock O(log log n)-competitive dynamic binary search trees.
\newblock In {\em 17th Annual ACM-SIAM Symposium on Discrete Algorithm (SODA)},
  page 374–383, 2006.

\bibitem{WilberBounds1989}
Robert Wilber.
\newblock Lower bounds for accessing binary search trees with rotations.
\newblock {\em SIAM Journal on Computing}, 18(1):56--67, 1989.

\end{thebibliography}

%%% %%% %%% \input{section_appendix.tex}
\label{section_appendix}

\appendix
\section{Appendix: Deferred Proofs and Discussions}

\subsection{Enforcing a Stable Tree for \emph{GF}}
\label{section_appendix_initial_tree}
We describe how to restructure any initial tree, to a desired tree, when $GF$ is considered. The initial tree cannot simply be re-organized since $GF$ updates the tree in a specific way following each query. 
Let $P \circ X$ denote the concatenation of the sequences
$P$  and $X$.
Note that even if $P$ enforces a desired tree when served alone, serving $P \circ X$ may give a different tree following $P$ when $X$ starts. The reason for this is that $GF$ restructures the tree while serving $P$ according to  future queries, therefore the existence of $X$ may affect its decisions while serving $P$. Nevertheless, we present a simple technique to enforce a tree for $GF$.
% in way that requires $O(n^2)$ queries and works for any sequence.

%\label{theorem_stabilization_sequence}

\begin{theorem}
\label{theorem_stabilization_oblivious_better}
For any tree $T$ there is a sequence $S(T)$ such that: (1) $|S(T)| = O(n \cdot d(T))$, and more precisely, $|S(T)| \le \min \Big (3n(d(T) - \floor{\lg n} + 1) , \frac{3}{2}n(n-1) \Big )$, and, (2) for any suffix of queries $Y$, when $GF$ serves $S(T) \circ Y$, its tree when is it done with the last query of $S(T)$ is $T$. We say that $S(T)$ \emph{enforces} $T$.
\end{theorem}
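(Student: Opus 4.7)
The plan is to construct $S(T)$ so that $GF$'s treap restructuring (Algorithm~\ref{alg_greedy_future}) installs $T$ during $S(T)$, and to terminate $S(T)$ with a query whose restructuring is a no-op. Specifically, I would make the last query of $S(T)$ be the root $r$ of $T$ and arrange the preceding queries so that just before this final query the tree is already equal to $T$ (with $r$ at the root). Then the access path to $r$ is the single node $\{r\}$, no rotations are performed, and the tree at the end of $S(T)$ is $T$, oblivious to the continuation $Y$.

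For the preceding portion I would use a sequence of ``passes'' over the nodes of $T$ in a fixed, shallow-to-deep ordering (for concreteness, BFS with $v_1=r$). In each pass the ordering $\sigma=v_1v_2\cdots v_n$ guarantees that when $GF$ serves a query $v_i$, the nodes that are ancestors of $v_i$ in $T$ reappear in the sequence sooner than the non-ancestors lying on the current access path, and this appearance happens inside $S(T)$ itself. Consequently the priorities $\tau$ used by Algorithm~\ref{alg_greedy_future} rank $T$-ancestors above $T$-non-ancestors, and the treap restructuring of the accessed path matches the shape of $T$ restricted to that path. The correctness proof then proceeds by induction on the number of passes: after pass $k$ the top $f(k)$ levels of the current tree agree with $T$, and the last pass, finalized by the trivial $r$-query, brings the full tree to $T$.

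For the length, each pass uses at most $n$ queries, and by a standard counting argument the balanced top portion of $T$ (of depth at most $\floor{\lg n}$) is locked in by a constant number of passes, while each deeper level requires a bounded number of extra passes; this yields at most $3(d(T)-\floor{\lg n}+1)$ passes and the bound $3n(d(T)-\floor{\lg n}+1)$. The alternative bound $\tfrac{3}{2}n(n-1)$ follows from a dual ancestor--descendant counting: each query in $S(T)$ can be charged to an ancestor/descendant pair of $T$, of which there are at most $\binom{n}{2}$, and this dominates when $T$ is tall and thin.

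The main technical obstacle is $GF$'s offline look-ahead: the restructuring done after any query depends on the entire future, including $Y$. Two devices handle this. First, the terminal $r$-query is a no-op, absorbing any $Y$-dependence of the final tree. Second, for each earlier query of $S(T)$, the subsequent repetitions of $\sigma$ inside $S(T)$ already pin down small $\tau$-values for every $T$-ancestor on the path, values that dominate any $\tau$-value that might be contributed by $Y$. The delicate part of the proof is verifying this domination rigorously---namely, that the ``in-$S(T)$'' $\tau$-values fall in the correct key-ranges, are small enough to preempt any later occurrence inside $Y$, and are ordered so as to force the exact shape of $T$ on each restructured path, independently of what $Y$ does later.
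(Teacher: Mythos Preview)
Your plan has a genuine gap in the obliviousness argument. You rely on ``subsequent repetitions of $\sigma$ inside $S(T)$'' to pin down the $\tau$-values of the $T$-ancestors on each access path, but after the \emph{last} full pass the only remaining query in $S(T)$ is the single terminal $r$. Hence the restructurings that $GF$ performs during that last pass look directly into $Y$, and they can destroy $T$ even if the tree equals $T$ when the pass begins. Concretely, let $T$ be the balanced tree on $\{1,\ldots,7\}$ with root $4$, so $\sigma=4,2,6,1,3,5,7$, and let $Y$ begin with $3,1$. Assume the tree is $T$ at the start of the last pass. When this pass serves the query $3$, the access path (sorted) is $2,3,4$ and the future is $5,7,4,3,1,\ldots$; then $\tau(4)=1$, but $\tau(3)=4$ and $\tau(2)=5$ are both determined by $Y$, so the treap makes $3$ the parent of $2$---already not $T$. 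The very next query, to $5$, has future $7,4,3,1,\ldots$, giving $\tau(6)=1<\tau(4)=2$, so $6$ becomes the root. When the terminal query $4$ arrives, $4$ is no longer at the root; its restructuring again depends on $Y$, and for this $Y$ the final left subtree is the chain $3\to 2\to 1$ rather than $2$ with children $1$ and $3$. So $S(T)$ does not enforce $T$.

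The paper sidesteps this by working in the opposite direction: it enforces $T$ \emph{bottom-up}. Step $i$ queries only the set $R_i$ of non-leaves of $T^{[i]}$ (each item twice in monotone order, then once more), which pushes every leaf of $T^{[i]}$ to a leaf position in $GF$'s tree. Because those items are never queried again in $S(T)$, they can never lie on a later access path, so no future restructuring---whatever $\tau$-values $Y$ induces---can touch them; obliviousness to $Y$ is structural rather than depending on a buffer of future repetitions. The shrinking sets $R_i$ also yield the two length bounds directly via $|S_i|=3|R_i|$, using $|R_i|\le n-i$ for the $\tfrac{3}{2}n(n-1)$ bound and $|R_{h-i}|\le 2^i-1$ for the $3n(d(T)-\lfloor\lg n\rfloor+1)$ bound. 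Your length arguments, by contrast, hinge on an invariant (``top $f(k)$ levels agree with $T$'') that the proposed construction does not actually maintain.
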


\begin{proof}
We enforce the structure of $T$ bottom-up, by first ensuring the position of the leaves, and continuing recursively  upwards towards the root. Define $T^{[0]} \equiv T$ and $T^{[i+1]}$ is the tree $T^{[i]}$ stripped of all of its leaves, until the final tree $T^{[h]}$ contains only the root. Observe that $h = d(T)$ (the depth of $T$). For each tree $T^{[i]}$ we define $R_i$ to be the set of non-leaf nodes in $T^{[i]}$. Note that a non-leaf node may be binary or unary.

We construct $S(T)$ in steps. In step $i \ge 0$ we query $R_i$ in two monotonic phases, where the first phase queries every item twice, and the second phase queries every item once. Denote the queries of this step by $S_i$. For example, if $R_0 = \{1,3,5\}$ then we query in the first step: $S_0 = [1,1,3,3,5,5,1,3,5]$. Note that $S_h = \emptyset$ since $R_h = \emptyset$ by definition. The resulting sequence $S(T)$ is the concatenation of the queries in all the (non-empty) steps, that is, $S(T) = S_0 \circ S_1 \circ \ldots \circ S_h$ ($\circ$ for concatenation).

First we analyze $|S(T)|$. Observe that we strip leaves between steps, $|R_{i+1}| \le |R_i| - 1$. Initially $|R_1| \le n-1$ hence $|R_i| \le n-i$. Also, $h < n$. Therefore we get that: $|S(T)| = \sum_{i=1}^{h}{|S_i|} = \sum_{i=1}^{h}{3|R_i|} \le 3\sum_{i=1}^{n}{(n-i)} = \frac{3}{2}n(n-1)$.
To get the other bound, observe that $|R_{h-i}| \le 2^i - 1$, which is meaningful for the last few steps, i.e., for $i \le \floor{\lg n}$. With that in mind:
$|S(T)| 
= 3\sum_{i=1}^{h - \floor{\lg n}}{|R_i|} + 3\sum_{i=h - \floor{\lg n} + 1}^{h}{|R_i|}
< 3\sum_{i=1}^{h - \floor{\lg n}}{n} + 3\sum_{j=0}^{\floor{\lg n} - 1}{2^j}
< 3n(h - \floor{\lg n}) + 3n$. In conclusion, $|S(T)| < 3n(d(T) - \floor{\lg n} + 1)$.

Next we prove that the tree of $GF$ is exactly $T$ when it finishes serving $S(T)$ regardless of any suffix of queries. The proof is by induction on $d(T) = h$. The base case is for $h=0$. In this case $S(T) = \emptyset$ (because $S_0 = \emptyset$), while also $T^{[h]} = T^{[0]} = T$ contains only the root. Querying nothing is not a problem since there is a unique tree with a single node, and we are done.
Now, assume that the claim holds for $h=k$. Consider a tree $T$ of depth $k+1$. First we show that when $GF$ finishes processing $S_0$, all the leaves of $T$ are leaves of the tree of $GF$, and will never be touched (accessed or otherwise), therefore they remain leaves until $GF$ finishes processing $S(T)$. See Figure~\ref{figure_enforcing_a_tree} for a visualization. Indeed, when we query a value $u$ twice in a row, $GF$ brings $u$ to the root when re-ordering the tree after the first query of the couple. Note that since we query $R_0$ monotonically, by the end of the first phase of $S_0$ (the queries in pairs), $GF$ has a tree whose left-spine is exactly the items of $R_0$. Indeed, each item, in  turn, is brought to the root and demotes the previous root to the left. Moreover, no value of $T^{[0]} \setminus R_0$ is part of this spine, because of the second phase of queries (monotonously querying the values of $R_0$). To see this, note that if on the first phase we touch $v \notin R_0$ when $r$ is the root and the pair of queries is to $u$, such that $r<v<u$, and $u$ is the successor of $r$ in $R_0$, then the next access to $r$ is closer in the future than that of $v$, so $r$ will indeed be placed as the left-child of the new root $u$ (and $v$ will be the right child of $r$). If $r<u<v$ then $v$ does not interfere with $r$ being the left child of $u$. %since it must remain in the right subtree of $u$.

Now that we know that all the leaves of $T = T^{[0]}$  are fixed as leaves of the tree of $GF$ when it finishes processing $S_0$, we can conclude by induction: $S' \equiv S_1 \circ \ldots \circ S_{k+1}$ is exactly the sequence that enforces $T^{[1]}$, and by our inductive assumption, $T^{[1]}$ is enforced correctly. Since the leaves of $T^{[0]}$ are not touched at all during $S'$, we conclude that they must remain hanging off $T^{[1]}$ as ``subtrees''. The location of each leaf is uniquely determined, and thus we conclude that $S(T)$ indeed enforces $T = T^{[0]}$.
\end{proof}

\begin{figure}[ht]%[!ht]
	\centering
	\begin{subfigure}[t]{.27\textwidth}
        \centering
        \includegraphics[width=\textwidth]{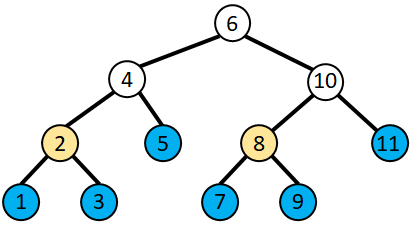}
        \caption{Desired tree $T$.}
        \label{figure_enforcing_0} 
    \end{subfigure}
    \hspace{5mm}
    \begin{subfigure}[t]{.54\textwidth}
       \centering
       \includegraphics[width=\textwidth]{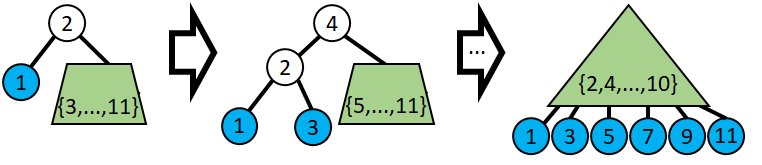}
       \caption{First step of enforcing $T$: fix the leaves.}
       \label{figure_enforcing_step1}
    \end{subfigure}
    \hspace{5mm}
    \begin{subfigure}[t]{.65\textwidth}
       \centering
       \includegraphics[width=\textwidth]{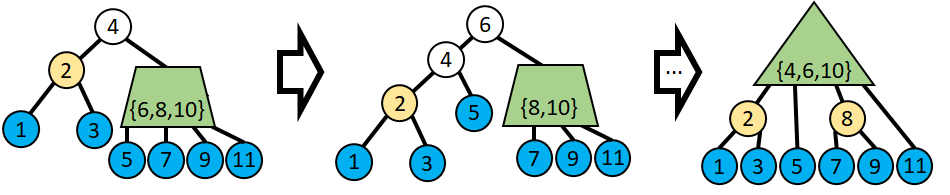}
       \caption{Second step of enforcing $T$: fix the leaves of $T^{[1]}$.}
       \label{figure_enforcing_step2}
    \end{subfigure}
 
	\caption{Example of enforcing a tree as detailed in Theorem~\ref{theorem_stabilization_oblivious_better}. The desired tree $T$ is shown in (a). Blue nodes are the leaves of $T^{[0]} = T$ and golden nodes are the leaves of $T^{[1]}$ (leaves of $T^{[2]}$ and $T^{[3]}$ are not colored). The first step of queries is $S_0 = [2,2,4,4,6,6,8,8,10,10,2,4,6,8,10]$; (b) shows the states after the first query of $2$, the first query of $4$, and by the end of $S_0$. Following the first step, it remains to enforce the remainder of the tree (the inductive step). Concretely, the second step of queries is $S_1 = [4,4,6,6,10,10,4,6,10]$; (c) shows the states after the first query of $4$, the first query of $6$, and by the end of $S_1$. The green trapezoids and triangles abstract away the structure of some subtrees. The last non-empty step is $S_2 = [6,6,6]$ (not shown), after which we get $T$.}
	\label{figure_enforcing_a_tree}
\end{figure}

\looseness=-1
Adding a prefix to our sequence may affect the competitive ratio. However, once we fixed the stable tree, we can repeat  the corresponding stable sequence to ``amplify'' the original competitive ratio making the effect of the prefix negligible. 
%%% [We do not elaborate further due to lack of space. We had a formal lemma to show this amplification-claim...]
One difficulty raised by repetitions is when we care about the length of the sequence in our claim. This is the case in Theorem~\ref{theorem_additive_loglogn} where we claim the existence of a sequence of length $n^{\Theta(\frac{\lg \lg n}{\lg \lg \lg n})}$. In the proof of this theorem we assumed for simplicity that we can choose the initial tree. The following remark shows that indeed we can start with an arbitrary initial tree without weakening the theorem.

\begin{remark}
\label{remark_amplification_additive_ok}
\looseness=-1
Let $X$ be an atomic mixed-stable sequence used to prove Theorem~\ref{theorem_additive_loglogn}. Consider the sequence $Z=S \circ {X^n}$, where $S$ is the prefix (guaranteed by Theorem~\ref{theorem_stabilization_oblivious_better}) that is enforcing the desired ``initial'' tree $T$ that corresponds to $X$, $X^n$ are $n$ repetitions of $X$, and $\circ$ represents concatenation. There are no unary nodes in $T$ and $X$ queries every leaf at least once so $\frac{n}{2} < \frac{n+1}{2} \le |X|$. Together with Theorem~\ref{theorem_stabilization_oblivious_better} we get that $n|X| \le |Z| < n(|X|+\frac{3n}{2}) < 4n|X|$, therefore we have: $|Z| = \Theta(n|X|) = n^{\Theta(\frac{\lg \lg n}{\lg \lg \lg n})}$ (the second equality is by Theorem~\ref{theorem_additive_loglogn}). Since after processing $S$ the tree of $GF$ is fixed: $cost(GF,Z,T_0) - cost(OPT,Z,T_0) \ge n \cdot (cost(GF,X,T) - cost(OPT,X,T))$. By the proof of Theorem~\ref{theorem_additive_loglogn} and Remark~\ref{remark_additive_gap_can_be_mixed_stable}: $cost(GF,X,T) - cost(OPT,X,T) = \Omega(|X| \cdot \lg \lg n)$, and putting everything together we get that: $cost(GF,Z,T_0) - cost(OPT,Z,T_0) = \Omega(|Z| \cdot \lg \lg n)$. Note that if $|X| = \Omega(n^2)$ then it suffices to define $Z = S \circ X$ without repetitions and we get that $|Z| = \Theta(|X|)$, and the rest of the arguments remain the same. There may be additional cases where repetitions are not required, e.g., when the depth of $T$ is $\lg n + O(1)$ (in this case, $|S| = O(n)$).
\end{remark}

\subsection{Omitted Proofs}
\label{appendix_section_missing_proofs}

In this subsection we restate and prove Lemmas and Theorems that were omitted from the main text. 
For convenience, we restate the claims
in their original numbering.

The proof of Theorem~\ref{theorem_2_competitive_on_stable_sequences} makes use of Wilber's first bound~\cite{WilberBounds1989}. We use the original presentation of this bound which is a bit tighter than later simplified versions such as~\cite{KozmaThesis}.

\begin{definition}[Wilber's First Bound~\cite{WilberBounds1989}]
\label{definition_wilber_first_bound}
Let $X$ be a sequence of queries, and let $T$ be a static reference tree such that every query of $X$ is in a leaf of $T$. An alternation at an inner node $u$ of $T$ is defined to be two queries closest in time such that one accesses either the left or right subtree of $u$ and the other accesses the other subtree of $u$. Define $ALT(u)$ to be the number of alternations at node $u$.
%%% [SAVING SPACE] Wilber's first lower-bound for an algorithm $A$ that always moves the queried item to the root is $cost(A,X) \ge WB(X)$ where $WB(X) \equiv m + \sum_{\text{inner\ } u \in T}{ALT(u)}$. $OPT$ can be simulated with an algorithm $OPT'$ that always moves the queried item to the root, and  reverts this move when serving the next request if $OPT$ did not move the item to the root. This gives $WB(X) \le cost(OPT',X) \le 2 \cdot cost(OPT,X) - m$ where the $(-m)$ is because the undo and next access overlap at least at the root of the tree. Therefore,
Then: $cost(OPT,X) \ge m + \frac{1}{2} \sum_{\text{inner\ } u \in T}{ALT(u)}$.
\end{definition}

\theoremTwoCompetitiveOnStableSequences*

\begin{proof}
We use the tree that corresponds to the mixed-stable sequence as the reference tree for Wilber's first bound. Arithmetic manipulations will yield an expression that we can tie to the cost of $GF$, according to the claim.

Let $X$ be a mixed-stable sequence, with a corresponding tree $T$. Let $S$ be the set of values that are in the leaves of $T$, and let $U$ be the set of inner nodes, $|U| = \frac{n-1}{2}$. We also denote by $A(i)$ the set of proper ancestors of $i$. By the definition of the cost of a static tree, we know that $\hat{c}(GF,X) = \sum_{i \in S}{(d(i)+1) \cdot f(i)}$ where $d(i)$ is the depth of $i$ and $f(i)$ is the frequency of accessing $i$. We extend $f(u)$ to refer to the frequency of visiting any node $u$. Note that $f(u) = \sum_{i \in S \wedge u \in A(i)}{f(i)}$ and that $\sum_{i \in S}{f(i)} = 1$.

Now consider Wilber's bound for $X$, with $T$ as the reference tree. We can use $T$ as the reference tree since $X$ only accesses leaves of $T$, by definition. We also denote $\alpha_u \equiv \frac{ALT(u) + 1}{f(u) \cdot m}$ ($ALT(u)$ is defined in Defintion~\ref{definition_wilber_first_bound}, and note that $0 \le ALT(u) \le f(u) \cdot m - 1$). We have $\alpha_u \in (0,1]$, where $\alpha_u = 1$ corresponds to fully alternating accesses to the subtree rooted at $u$.
%%% Redacted due to lack of space: \footnote{$ALT(u) \le f(u) \cdot m - 1$ because the number of alternation is at most one less than the number of accesses, due to counting pairs of consecutive accesses.}
The lower bound is $cost(OPT,X) \ge m + \frac{1}{2} \sum_{u \in U}{(ALT(u) + 1)} - \frac{|U|}{2}
= \frac{m}{2} + \frac{m}{2} (1 + \sum_{u \in U}{\alpha_u \cdot f(u)}) - \frac{n-1}{4}
= \big ( \frac{m}{2} - \frac{n-1}{4} \big ) + \frac{m}{2} \sum_{i \in S}{(1 + \sum_{u \in A(i)}{\alpha_u}) f(i)}$. 
Let $\alpha \le \min_{u \in U}{\alpha_u}$, we get that $\hat{c}(OPT,X) \ge \big (\frac{1}{2} - \frac{n-1}{4m} \big ) + \frac{\alpha}{2} \sum_{i \in S}{(d(i)+1) \cdot f(i)} = \frac{\alpha}{2} \hat{c}(GF,X) + \big (\frac{1}{2} - \frac{n-1}{4m} \big )$ where the equality holds since $GF$ maintains a static tree. Thus $\hat{c}(GF,X) \le \frac{2}{\alpha} \cdot \hat{c}(OPT,X) - \frac{1}{\alpha} \big (1 - \frac{n-1}{2m} \big ) < \frac{2}{\alpha} \cdot \hat{c}(OPT,X)$.

In order to choose a suitable $\alpha$, recall that a strongly-stable node $u$ has a coefficient of $\alpha_u = 1$, which means that for strongly-stable sequences, in which all inner nodes are stable, we can pick $\alpha=1$ and conclude that $\hat{c}(GF,X) < 2 \cdot \hat{c}(OPT,X)$. If $u$ is a weakly-stable node, then its coefficient is $\alpha_u = \frac{2}{3}$.
% because there is a bias of $1:2$ ratio of visiting its different subtrees.
So for a mixed-stable sequence we can naively pick $\alpha = \frac{2}{3}$, resulting in $\hat{c}(GF,X) < 3 \cdot \hat{c}(OPT,X)$.

In order to improve from $3$ to $\frac{5}{2}$, we observe that by definition, every weakly-stable node has a strongly-stable child. Let $u$ be a weakly-stable node and let $w$ be its (strongly-stable) favored-child (recall Definition~\ref{definition_stable_improved}). Since $ALT(u)=ALT(w)$ (by definition of the access pattern in $u$), we can present Wilber's bound differently, summing $(ALT(u)+1) \cdot (1 + \beta) + (ALT(w)+1) \cdot (1 - \beta)$ instead of $(ALT(u)+1) + (ALT(w)+1)$. We get modified coefficients $\alpha'_u = \frac{(ALT(u)+1) \cdot (1 + \beta)}{m \cdot f(u)} = \alpha_u \cdot (1+\beta) = \frac{2(1+\beta)}{3}$ and similarly $\alpha'_w = \alpha_w (1-\beta) = (1-\beta)$. Choosing $\beta = \frac{1}{5}$ balances the coefficients: $\alpha'_u = \alpha'_w = \frac{4}{5}$. Now we can choose $\alpha = \frac{4}{5}$, and get $\hat{c}(GF,X) < \frac{5}{2} \cdot \hat{c}(OPT,X)$ for mixed-stable sequences.
\end{proof}

\theoremGFOnSubsequence*

\begin{proof}
Denote the initial tree by $T_0$. Let $Z$ be the weakly-stable sequence used for proving Theorem~\ref{theorem_multiplicative_2_lower_bound}. Let $T_P$ be the tree that corresponds to $Z$ and $T_Q$ the optimized tree, in which the leaves are promoted as in Lemma~\ref{lemma_fibo_promotion_weakly_stable}. Let $P$ and $Q$ be the sequences that enforce $T_P$ and $T_Q$ by Theorem~\ref{theorem_stabilization_oblivious_better}, respectively. Note that $\epsilon$ determines $Z$, $P$ and $Q$ since it tells us how close to a ratio of $2$ we need to get.

Revisit Figure~\ref{figure_promotion_fibo_k2} to see the (recursive) structures of $T_P$ (on the left) and $T_Q$ (on the right, post-promotions). Observe that $T_P$ remains static when $GF$ serves $Z$ with it, by definition. Moreover, $T_Q$ remains static when $GF$ serves $Z$ with it. Indeed, let $r$ be the root of $T_Q$. $Z$ queries the item in $r$ every third access and the other accesses are alternating between its left and right subtrees, hence $r$ remains the root of $T_Q$. The rest of $T_Q$ remains static recursively.

\looseness=-1
Define $X = P \circ Q \circ Z^k$ for a large $k$, and $X' = P\circ Z^k \subset X$ ($\circ$ for concatenation). Since $GF$ does not change $T_P$ and $T_Q$ while serving $Z$ we get that
$\frac{cost(GF,X')}{cost(GF,X)}
= \frac{cost(GF,P, T_0) + k\cdot cost(GF,Z,T_P)}{cost(GF,P \circ Q,T_0) + k\cdot cost(GF,Z,T_Q)}
$. This ratio approaches $\frac{cost(GF,Z,T_P)}{cost(GF,Z,T_Q)}$ for large enough $k$, and since $T_p$ and $T_Q$ are exactly the trees used in the proof of Lemma~\ref{lemma_fibo_promotion_weakly_stable}, we conclude that we can make the resulting ratio as close to $2$ as we like (choosing $Z,P,Q$ according to the desired $\epsilon$).
\end{proof}

\theoremGFOnReverseSequence*

\begin{proof}
The proof is similar to that of Theorem~\ref{theorem_GF_on_subsequence}, and we define $Z$, $T_0$, $P$, $T_P$, $Q$ and $T_Q$ the same way. Here we define $X = Q \circ (rev(Z))^{k+1} \circ rev(P)$, for a large $k$.

We claim that $T_Q$ remains static when $GF$ serves $rev(Z)$ over it, rather than $Z$, by the same argument as in the proof of Theorem~\ref{theorem_GF_on_subsequence}, because the interleaving pattern in the root is preserved under reversal. Moreover, $cost(GF,rev(Z),T_Q) = cost(GF,Z,T_Q)$ because the cost on a static tree depends only on the access frequencies. Putting everything together, we get: $\frac{cost(GF,rev(X))}{cost(GF,X)} = \frac{cost(GF,P,T_0) + k \cdot cost(GF,Z,T_P) + cost(GF,Z \circ  rev(Q),T_P)}{cost(GF,Q,T_0) + k \cdot cost(GF,rev(Z),T_Q) + cost(GF,rev(Z) \circ rev(P),T_Q)}$. Note that the suffix contains one repetition of $Z$ so that the rest of it ($rev(P)$ or $rev(Q)$) does not affect the restructuring decisions of $GF$ during the earlier repetitions of $Z$. The limit of this ratio for large $k$ is $\frac{cost(GF,Z,T_P)}{cost(GF,Z,T_Q)}$. We finish the argument as in the proof of Theorem~\ref{theorem_GF_on_subsequence}.
\end{proof}

\end{document}